\renewcommand{\arraystretch}{1.22}
\let\oldcite\cite
\renewcommand{\cite}[1]{\textcolor{blue}{\oldcite{#1}}}
\renewcommand\hyper@natlinkbreak[2]{#1}
\newtheorem{theorem}{Theorem}
\newtheorem{lemma}[theorem]{Lemma}
\renewenvironment{proof}[1][\proofname]{\par
  \pushQED{\qed}%
  \normalfont \topsep6\p@\@plus6\p@\relax
  \trivlist
  \item[\hskip\labelsep
        \bfseries
    #1\@addpunct{.}]\ignorespaces
}{%
  \popQED\endtrivlist\@endpefalse
}
\newcolumntype{L}[1]{>{\raggedright\let\newline\\arraybackslash\hspace{0pt}}m{#1}}
\newcolumntype{C}[1]{>{\centering\let\newline\\arraybackslash\hspace{0pt}}m{#1}}
\newcolumntype{R}[1]{>{\raggedleft\let\newline\\arraybackslash\hspace{0pt}}m{#1}}
\renewcommand\hyper@natlinkbreak[2]{#1}
\begin{document}

\begin{titlepage}
\title{\vspace{-0.cm}\Large How manipulable are prediction markets?\vspace{-0.55cm}\thanks{For comments on our experimental design, we are grateful to Jesper Akesson, Adam Brzezinski, Robin Hanson, Junnan He and Koleman Strumpf. For additional comments and suggestions, we would like to thank Sam Altmann, Johannes Abeler, Miguel Ballester, Gabriele Camera, Jeanne Commault, Antoine Ferey, Laure Goursat, Robert Hahn, Emeric Henry, Marco Mantovani, Théo Marquis, Ryan Oprea, Franz Ostrizek, Lionel Page, Eduardo Perez-Richet, Stefan Pollinger, Daniela Puzzello, Catriona Scott, Alessandro Stringhi, Jasmine Theilgaard, Séverine Toussaert and Eugenio Verrina. We are also grateful to audiences at Sciences Po, the University of Oxford, the Lisbon School of Economics and Management, City St George's, and the Panthéon-Sorbonne Prediction Markets Workshop. This project received IRB approval from the Paris School of Economics (2023-042).
\newline [\href{https://www.socialscienceregistry.org/trials/12543}{Main pre-registration}; \href{https://www.socialscienceregistry.org/trials/14767}{Follow-up}] [\href{https://github.com/Itzhak95/prediction_markets_model}{Simulations}] \href{https://drive.google.com/drive/folders/1lAD3E9djYVvm6DzXXx9hPN0Mq1k4OKGK?usp=sharing}{[Replication package]}}}
\author{Itzhak Rasooly\thanks{Sciences Po \& Paris School of Economics.}
\and
Roberto Rozzi\thanks{University of Siena.}
}
\date{\today}
\maketitle
\vspace{-2em}
\begin{abstract}
\noindent \textsc{Abstract.} In this paper, we conduct a large-scale field experiment to investigate the manipulability of prediction markets. The main experiment involves randomly shocking prices across 817 separate markets; we then collect hourly price data to examine whether the effects of these shocks persist over time. We find that prediction markets can be manipulated: the effects of our trades are visible even 60 days after they have occurred. However, as predicted by our model, the effects of the manipulations somewhat fade over time. Markets with more traders, greater trading volume, and an external source of probability estimates are harder to manipulate.

\noindent \\
\vspace{-0.75cm}\\
\noindent \textsc{Keywords:} prediction markets, field experiment
\vspace{0in}\\
\noindent\textsc{JEL Codes:} C63, C93, D84, D90\\

\bigskip
\end{abstract}
\setcounter{page}{0}
\thispagestyle{empty}
\end{titlepage}
\pagebreak
\newpage

\section{Introduction}\label{sec_intro}

Prediction markets --- that is, markets in which traders can bet on the outcomes of various events --- have proven to be powerful information aggregators. Existing research shows that the prices that arise in prediction markets can predict outcomes as well if not better than alternative forecasting methods such as expert polls \citep{figlewski1979subjective, roll1984orange, pennock2001real, wolfers2002three, berg2008results, dreber2015using}. In addition, while alternative methods such as polling can be costly, prediction markets are largely self-financing. In part for this reason, prediction markets currently constitute the \textit{only} source of probability estimates on many important questions. For example, although one can find prediction markets on the outcomes of various geopolitical conflicts, it is difficult to obtain public probability estimates on the outcomes of these conflicts through other means.

Perhaps due to these encouraging results, prediction markets are currently undergoing something of a renaissance. Although the prediction markets previously studied in the literature (e.g. the Iowa Electronic Markets) were often small and developed for academic purposes, recent years have seen the founding of several much larger prediction market platforms. Polymarket, founded in 2020, often handles large volumes of trade: for example, $\sim$\$7 billion was traded on Polymarket in 2024. Kalshi, founded in 2021, also routinely handles large volumes of trade, peaking at $\sim$\$1 billion in 2024. Manifold Markets, also founded in 2021, has become the largest prediction market website in the world as measured by the number of markets hosted on the platform.

Despite this promise, prediction markets are hampered by long-standing concerns about manipulability. Historically, it has not been uncommon for politicians to bet on themselves in the hope of increasing the market's assessment of their electoral winning chances \citep{rhode2004historical, rhode2006manipulating}. While this is one motivation for individuals to attempt to manipulate prediction market prices, it is not the only one: for example, concerns that terrorists would make trades to `distract' onlookers from their true targets were cited during the cancellation of the planned Policy Analysis Market in 2003.\footnote{In the words of one Nobel Laureate, `[trading] could be
subject to manipulation, particularly if the market has few participants --- providing a false
sense of security or an equally false sense of alarm' \citep{stiglitz}. While this concern may be valid in general, it is worth noting that the Policy Analysis Market would not have allowed traders to bet on the locations or timings of terrorist events \citep{hanson2004foul, hanson2009manipulator}.} Concerns about the manipulability of prediction markets remain prominent in more recent media coverage, not least during the run-up to the 2024 US Presidental Election: see, for example, \cite{vox}, \cite{time}, \cite{ft_election}, \cite{wsj}, \cite{bloomberg} and \cite{nyt2}.

Although it is clear that many would \textit{like} to manipulate prediction markets, it is less clear that such attempts would be successful. Indeed, according to much of the existing literature (see, e.g., \citealp{wolfers2004prediction, berg2006iowa, sunstein, crane2022comment}), any manipulation attempt will be extremely short-lived since it will be rapidly `undone' by the behavioural responses of subsequent traders. Understanding this issue also provides an indirect test of the efficiency of prediction markets \citep{fama1970efficient}. If market prices only reflect the `fundamentals', then the effects of random trades should be transient. However, if markets are inefficient, the effect of random trades could persist.

In this paper, we study these questions using both theoretical and experimental methods. On the theoretical side, we build the first model that traces out the path of prices in a prediction market following a manipulation attempt. On the experimental side, we conduct the first large-scale field experiment on the manipulability of prediction markets.

We begin by presenting our model. In the model, agents are risk-averse and optimally choose how to bet based on their beliefs about the likelihood that the event will take place. Prices are determined using the algorithm that underpins the markets that we study experimentally. Under this algorithm, prices are non-linear in purchase quantities; and our traders understand this when they decide how much to bet. In contrast to existing literature, our model does not assume competitive equilibrium pricing. This allows us to avoid logical inconsistencies\footnote{Specifically, competitive equilibrium assumes a continuum of traders, none of whom can move the market price \citep{aumann1964markets}; in contrast, the whole point of our model is to study the adjustment of prices after the price is pushed in a particular direction by a trade.} and to study the full path of prices following a manipulation attempt (including `disequilibrium' prices).

The model makes three important predictions about the impact of a manipulative trade. First, the model predicts that manipulation can systematically affect the market price --- both in the short run (due to the need for behavioural adjustment), but also in the long run (e.g., due to `learning effects'). Second, however, the model also predicts that the effect of manipulation should be partially `undone' by future trades: for example, if a manipulator increases the market price, then traders should believe that the price is `too high' and take actions to correct this. The speed of price reversion is predicted to decline as the price moves back towards its original value. Third, the model suggests that the degree of reversion should systematically vary by market type. For example, markets with more traders and less scope for `learning' from the market price should be harder to manipulate.

To test these predictions, we conduct the first-ever large-scale field experiment on the manipulability of prediction markets. Our main experiment takes place on the Manifold platform, which had around 10,000 active users at the time when the experiment took place. Crucially, Manifold hosted (and continues to host) a very large number of different prediction markets, a feature which makes our large-scale field experiment possible. Unusually for a prediction market platform, Manifold motivates users using a mix of financial incentives (winnings can be donated to charity) along with social and self-image incentives enhanced by gamified statistics and leaderboards. Despite these features, we show that Manifold markets are remarkably well-calibrated and exhibit levels of predictive accuracy that are comparable to those of more traditional prediction market platforms.

The basic idea of our experiment is straightforward. In every market, we either place a \textit{yes bet} (chosen to instantaneously increase the price by 5 percentage points), a \textit{no bet} (chosen to instantaneously decrease the price by 5 percentage points), or do nothing (the `control'). Notice that, unusually for a field experiment, our experiment randomises at the market level: in total, we have $n = 817$ different prediction markets in our sample. To investigate whether the effects of our bets persist, we then collect hourly price data over a 30 day period (leading to $\sim$600,000 price observations in total); we also measure a snapshot of prices after 60 days. To understand important heterogeneities, we collect rich data on each market's type including the total volume and number of traders at the time when our trade is executed.

Our main experiment yields three main sets of results. First, we show that the effects of our manipulations are visible in the data even 60 days after they have occurred: market prices in the `yes' group are higher than they would otherwise have been, and market prices in the `no' group are correspondingly lower. Second, however, as predicted by our theoretical model, the behavioural responses of subsequent traders generate some reversion of prices towards their original values. As in the model, reversion is relatively quick in the week following our bets but slows as time progresses. Third, the degree of reversion varies across markets in just the way that is suggested by our model. In particular, markets which are duplicated on a different prediction platform (which provides an `external' source of probability estimates), markets with more traders, and markets with more `activity' (measured, e.g., by the number of comments left by traders) are more difficult to manipulate.

To check the robustness of our results, we conduct an analogous follow-up experiment using a new type of market that Manifold introduced. In contrast to the markets studied in our main experiment, these markets run on a currency that is redeemable at a 1 to 1 rate with US dollars. The results of our follow-up experiment are broadly similar to those of the main experiment. We once again observe a substantial degree of reversion, i.e. that the reactions of other traders push prices back towards their original values. However, we again observe clear evidence of manipulability: average prices in the markets in which we had bet `yes' are always higher than average prices in the markets in which we had bet `no'.

Taken together, our results imply that prediction markets \textit{can} be manipulated --- although the extent to which any manipulation persists depends on market characteristics in systematic ways. This result is suggested not just by theory (e.g. due to learning effects), but also appears in every sub-sample of the experimental data that we investigate. The success of our manipulations may be due to their relatively small size: while a shock of 5 percentage points is large enough to be statistically detectable, it appears to have been small enough to result in plausible looking market probabilities. In contrast, very large shocks may be insufficiently `persuasive' to persistently move market prices.

Our paper builds on several literatures that study the manipulability of prediction markets. An important precursor to our work is \cite{camerer1998can}, who studies the impact of making random bets on particular horses in the context of pari-mutuel racetrack betting.\footnote{See also \cite{brown2017anchoring} for a related experiment on `anchoring'.} Since \cite{camerer1998can}'s setting is not a prediction market, it differs from ours in multiple ways: for example, \cite{camerer1998can} studies a very different population of traders operating in an environment with a different pricing rule and a substantially more rapid resolution of uncertainty.\footnote{In addition, since horse races are repeated, it is plausible \cite{camerer1998can}'s bettors broadly agreed on what the prices of different horses ought to be. As our model makes clear, this makes manipulating the market price substantially more difficult and may help explain the contrast between our results and those that \cite{camerer1998can} obtains.} In addition, unlike in our experiment, \cite{camerer1998can} cancels his bets around 12 minutes after they have been made. For these reasons, it is perhaps unsurprising that we obtain very different results: while our trades have persistent effects even 60 days after they are made, the effect of \cite{camerer1998can}'s bets appears to vanish within around 20 minutes.

Somewhat closer to our study, a small literature attempts to trace out the impact of historical manipulation attempts within prediction markets. For example, \cite{rhode2004historical, rhode2006manipulating} study attempts to manipulate US prediction markets in the early 20th century; while \cite{hansen2004manipulation} examine an attempt to manipulate prediction markets during the 1999 Berlin state elections.\footnote{See also \cite{rothschild2016trading} for an examination of an apparent manipulation attempt during the 2012 US President Election.} Such studies clearly establish that many have \textit{attempted} to manipulate prediction markets. However, it is difficult to assess whether such attempts were successful using this approach since it is difficult to estimate the path that prices would have taken in the absence of the manipulation attempt. Our approach entirely side-steps this issue through the use of a randomised field experiment.

The manipulability of prediction markets has also been studied by a series of laboratory experiments: see \cite{hanson2006information, oprea2008can, veiga2009price, veiga2010information, jian2012aggregation, deck2013affecting, buckley2017effect} and \cite{choo2022manipulation}.\footnote{For laboratory evidence on prediction markets that does not study the question of manipulability, see also \cite{filippin2023risk}, \cite{mantovani2024prediction} and \cite{galanis2024information}.} However, the prediction markets created in the laboratory are typically quite different from those observed in the field.\footnote{For instance, they are populated by a different group of traders (typically undergraduate students) who form beliefs based on quite stylised information about abstract topics (e.g. the number of balls in an urn).} Moreover, due to the logistical difficulty of constructing markets in the laboratory, it is typically only feasible to run at most a handful of markets in the study. This makes identifying the impact of a manipulation difficult (especially given that observations within a market cannot be treated as independent) and makes the kind of heterogeneity analysis that we conduct impossible. For these reasons, it is useful to complement the existing laboratory evidence with evidence from the field.

Perhaps most relevant for our purposes, \cite{rhode2006manipulating} report the results of an experiment on the Iowa Electronic Markets that involves making 15 bets in total on 2 (inter-related) markets. Since \cite{rhode2006manipulating} were interested in manipulating a particular market (on the 2000 US Presidential Election), their experiment is very different from the large-scale and across-market field experiment whose results we report here. For this reason, our study is better powered to detect the long-run effects of manipulative trades.

Finally, although our paper contributes to the empirical literature on prediction markets, we also contribute to the theoretical literature. In particular, we adapt `price theory' style models of prediction markets \citep{gjerstad2005risk, manski2006interpreting, wolfers2006interpreting} so that they can cast light on price dynamics following a manipulation attempt.\footnote{For models in the `rational expectations' tradition (from which we depart), see also \cite{allengale, kumar1992futures, hanson2009manipulator, ottaviani2007outcome}.} To do this, we relax the assumption that agents are price takers and that prices are determined by competitive equilibrium; instead, agents in our model understand the extent to which their trades will alter the market price. Our model is, to our knowledge, the first attempt to theoretically trace out the path of prices following a price manipulation attempt.

The remainder of this article is structured as follows. Section \ref{sec_theory} presents our theoretical model; Section \ref{sec_background} describes the platform that we study experimentally; and \ref{sec_design} presents our experimental design. Our main results are contained within Section \ref{sec_results}; while Section \ref{sec_sweepcash} presents the results of our follow-up experiment. Section \ref{sec_conclusion} concludes with a discussion of the open questions raised by this research.

\section{Manipulation in theory}\label{sec_theory}

In this section, we develop some theoretical expectations regarding the impact of a manipulation attempt. As discussed earlier, our model builds on seminal work by \cite{manski2006interpreting}, \cite{gjerstad2005risk} and \cite{wolfers2006interpreting}. In contrast to these papers, prices in our model are determined by a common pricing algorithm instead of competitive equilibrium, which allows us to study the adjustment path of prices following a shock. In addition, our agents understand that their trades impact the market price and take this into account when choosing how to trade.

\textbf{Market.} We consider a single market in which traders can bet on whether an event will or will not take place. Traders can buy and sell \textit{yes shares}, each of which pays out 1 currency unit if and only if the event takes place. Likewise, traders can buy and sell \textit{no shares}, each of which pays out 1 currency unit if and only if the event does not take place. As pointed out by \cite{gjerstad2005risk}, one can assume without loss of generality that traders buy either yes shares or no shares (but not both).

\textbf{Traders.} There are a finite number of traders, indexed by $i \in \{0, 1, ..., m\}$. Each trader believes that the event will occur with probability $\pi_i \in [0, 1]$; thus, the prior beliefs of traders can differ. For simplicity, we assume that each trader has the same initial wealth $w > 0$ and vNM utility function $u \colon \mathbb{R}^+ \mapsto \mathbb{R}$. Let $q_{y}^i \geq 0$ and $q_{n}^i \geq 0$ denote trader $i$'s purchase of yes and no shares respectively (as discussed before, either $q_{y}^i = 0$ or $q_{n}^i = 0$). We assume that each trader maximises expected utility, i.e.
\begin{equation}\mathbb{E}[u(q_{y}^i, q_{n}^i)] =
\begin{cases}
\pi_i u(w + q_y^i - C(q_y^i)) + (1 - \pi_i)u(w - C(q_y^i)) & \text{if } q_y^i \geq 0\\
\pi_i u(w - C(q_n^i)) + (1 - \pi_i)u(w + q_n^i - C(q_n^i)) & \text{if } q_n^i \geq 0 \end{cases}
\end{equation}
where $C(q_y^i)$ and $C(q_n^i)$ are the costs of purchasing yes and no shares. We assume that $u$ is twice differentiable with $u'(w_s) > 0$ and $u''(w_s) < 0$ for all $w_s > 0$, where $w_s$ is the agent's wealth in a generic state. To prevent corner solutions, we assume that $\lim_{w_s \rightarrow 0^+}u'(w_s) = \infty$. Finally, we assume that $u$ exhibits decreasing absolute risk aversion, i.e. that $-u''(w_s)/u'(w_s)$ is strictly decreasing in $w_s$ for all $w_s > 0$. Although this assumption is unnecessarily strong, it simplifies the analysis considerably; see the discussion below.

\textbf{Pricing.} We assume that costs are determined by the constant product rule; this pricing algorithm underpins the prediction market that we study experimentally (see Section \ref{sec_background}). Under this rule, prices are set by an automated market maker (AMM) that holds reserves of yes and no shares, denoted ($y$, $n$). To illustrate the mechanics of the rule, suppose that a trader wishes to spend $x$ currency units to purchase yes shares (analogous comments apply to purchases of no shares). The trader first transfers $x$ to the AMM, which converts this into $x$ yes shares and $x$ no shares; thus, the AMM's reserves become ($y + x$, $n + x$). The AMM then transfers the trader the number of shares $q$ that satisfies
\begin{equation}\label{eq_constantproduct}
(y + x - q)(n + x) = yn
\end{equation}
That is, it supplies the number of yes shares necessary to restore the product of its reserves to its previous value $yn$. This rule implicitly defines the number of shares that a trader receives $q$ given the amount of currency they have spent $x$. Indeed, solving (\ref{eq_constantproduct}) for $q$ yields
\begin{equation}\label{eq_q}
q = \frac{x (n + x + y)}{n + x}
\end{equation}
Equivalently, after letting $C(q) = x$ denote the total cost of the shares, one sees that
\begin{equation}
C(q) =  \frac{\sqrt{(n  - q + y)^2 + 4 n q} + q - n - y}{2}
\end{equation}
From this, it is straightforward to derive the marginal and average cost functions. The following lemma characterises their key properties.\footnote{All proofs are collected in Appendix \ref{sec_proofs}.}

\begin{lemma}\label{lemma1}
Under the constant product rule, the marginal cost of yes shares $MC(q)$ satisfies \(\mathrm{(i)}\) $MC(0) = n/(n + y)$, \(\mathrm{(ii)}\) $MC'(q) > 0$ for all $q \geq 0$, and \(\mathrm{(iii)}\) $\lim_{q \rightarrow \infty} MC(q) = 1$. Similarly, the average cost of yes shares $AC(q)$ satisfies \(\mathrm{(i)}\) $\lim_{q \rightarrow 0^+} AC(q) = n/(n + y)$, \(\mathrm{(ii)}\) $AC'(q) > 0$ for all $q > 0$, and \(\mathrm{(iii)}\) $\lim_{q \rightarrow \infty} AC(q) = 1$.
\end{lemma}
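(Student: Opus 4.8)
The plan is to work directly from the closed-form expression for $C(q)$ displayed above, reducing everything to elementary calculus once the radicand has been put in a convenient form. The first step is the algebraic identity $(n - q + y)^2 + 4nq = (q + n - y)^2 + 4ny$; writing $R(q) := \sqrt{(q + n - y)^2 + 4ny}$, this gives $C(q) = \tfrac12\bigl(R(q) + q - n - y\bigr)$. Since the AMM holds strictly positive reserves ($n, y > 0$), we have $R(q) \geq 2\sqrt{ny} > 0$ for every $q \geq 0$, so $C$ is smooth on $[0, \infty)$, $C(0) = \tfrac12\bigl((n+y) - n - y\bigr) = 0$, and all the differentiations below are legitimate.

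Next I would treat the marginal cost. Differentiating the above expression gives $MC(q) = C'(q) = \tfrac12\bigl(1 + (q + n - y)/R(q)\bigr)$. Claim (i) follows by evaluating at $q = 0$ and using $R(0) = \sqrt{(n - y)^2 + 4ny} = n + y$, which yields $MC(0) = \tfrac12\bigl(1 + (n - y)/(n + y)\bigr) = n/(n + y)$. Claim (iii) follows since $R(q)/q \to 1$ as $q \to \infty$, hence $(q + n - y)/R(q) \to 1$ and $MC(q) \to 1$. For claim (ii), differentiate once more; the key simplification is $R(q)^2 - (q + n - y)^2 = 4ny$, which collapses the numerator and yields $MC'(q) = 2ny/R(q)^3 > 0$ for all $q \geq 0$. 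As a cross-check, implicit differentiation of the defining relation $(y + C(q) - q)(n + C(q)) = yn$ gives $MC(q) = (n + C(q))/(n + 2C(q) + y - q)$, and one verifies $n + 2C(q) + y - q = R(q)$.

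Finally I would turn to the average cost. Claims (i) and (iii) are immediate from the marginal-cost limits, using $C(0) = 0$: $\lim_{q \to 0^+} AC(q) = \lim_{q \to 0^+} \bigl(C(q) - C(0)\bigr)/q = C'(0) = MC(0) = n/(n + y)$, and $\lim_{q \to \infty} AC(q) = \lim_{q \to \infty} C(q)/q = \lim_{q \to \infty} C'(q) = 1$ by L'Hôpital. For claim (ii), I would use $AC'(q) = \bigl(MC(q) - AC(q)\bigr)/q$ together with the strict monotonicity of $MC$ and $C(0) = 0$: then $C(q) = \int_0^q MC(t)\,dt < q\,MC(q)$ for $q > 0$, so $AC(q) = C(q)/q < MC(q)$, and hence $AC'(q) > 0$.

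There is no deep obstacle here; the only real content is bookkeeping. The crux is the algebraic identity $(n - q + y)^2 + 4nq = (q + n - y)^2 + 4ny$ (and its consequence $R(q)^2 - (q + n - y)^2 = 4ny$), which is what makes $MC'$ come out with a clean, manifestly positive sign rather than an unwieldy rational expression; and it is worth recording explicitly that $n, y > 0$, so that $R(q)$ is bounded away from zero and there are no issues of differentiability or division by zero at the endpoints.
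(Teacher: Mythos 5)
Your proposal is correct, and for the marginal-cost claims and the two average-cost limits it follows essentially the same route as the paper: differentiate the closed-form $C(q)$, evaluate at $q=0$, take the limit, and use $C(0)=0$ plus L'H\^opital for $AC$. Your $MC'(q) = 2ny/R(q)^3$ is literally the paper's expression $2ny/\bigl((n-q+y)^2+4nq\bigr)^{3/2}$, just written via the identity $(n-q+y)^2+4nq=(q+n-y)^2+4ny$, which does make the positivity transparent. The one genuinely different step is $AC'(q)>0$: the paper differentiates $AC$ explicitly, multiplies through by $2q^2\sqrt{(n-q+y)^2+4nq}$, and squares both sides to reduce the inequality to $nq^2y>0$, whereas you invoke $AC'(q)=\bigl(MC(q)-AC(q)\bigr)/q$ together with $C(q)=\int_0^q MC(t)\,dt < q\,MC(q)$ (valid because $MC$ is strictly increasing, which you have already proved). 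Your argument is shorter, makes the average-cost monotonicity a direct corollary of the marginal-cost monotonicity, and sidesteps the sign bookkeeping that the paper's squaring step implicitly requires; the paper's computation, on the other hand, is self-contained and does not rely on the integral representation. Either way the content of the lemma is fully established.
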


As Lemma \ref{lemma1} emphasises, pricing under the constant product rule is non-linear. For a small purchase of yes shares, both the marginal and average costs are close to $n/(n + y)$ and thus determined by the ratio of yes and no shares held by the AMM (see Figure \ref{fig:mc_ac}). As the size of the purchase rises, yes shares become more scarce and so both average and marginal costs rise. In the limit as $q \rightarrow \infty$, average and marginal costs converge to $1$; notice that such a purchase is necessarily unattractive since the expected value of each share cannot exceed $1$.

\textbf{Optimal trading.} We now characterise the optimal trading behaviour of the market participants. As the next result shows, the `marginal price' $n/(n + y)$ plays a key role in pinning down which types of trade participants will make. Specifically, participants with optimistic beliefs ($\pi_i > n/(n + y)$) will bet that the event will take place; whereas participants with pessimistic beliefs ($\pi_i < n/(n + y)$) will bet that event will not take place.

\begin{lemma}\label{lemma2}
Define $p = n/(n+y)$. Then
\begin{itemize}
    \item If $\pi_i > p$, the trader will buy a positive quantity of yes shares.
    \item If $\pi_i = p$, the trader will not hold any shares.
    \item If $\pi_i < p$, the trader will buy a positive quantity of no shares.
\end{itemize}
\end{lemma}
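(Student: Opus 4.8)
The plan is to analyze trader $i$'s maximization problem separately over the "yes branch" ($q_y^i \geq 0$, $q_n^i = 0$) and the "no branch" ($q_n^i \geq 0$, $q_y^i = 0$), and to determine in each case whether the optimum is interior or at the corner $q = 0$, comparing the two branches at the end. First I would write down the expected-utility objective on the yes branch, $f_y(q) = \pi_i u(w + q - C(q)) + (1 - \pi_i) u(w - C(q))$, and compute its derivative at $q = 0$. Since $C(0) = 0$ and, by Lemma \ref{lemma1}, $C'(0) = MC(0) = n/(n+y) = p$, the right-derivative at zero is $f_y'(0^+) = \pi_i u'(w)(1 - p) - (1 - \pi_i) u'(w) p = u'(w)(\pi_i - p)$. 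Because $u'(w) > 0$, this is strictly positive precisely when $\pi_i > p$, zero when $\pi_i = p$, and strictly negative when $\pi_i < p$. An exactly symmetric computation on the no branch gives $f_n'(0^+) = u'(w)\big((1-\pi_i) - (1-p)\big) = u'(w)(p - \pi_i)$, which is strictly positive iff $\pi_i < p$.

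Next I would argue that each branch objective is strictly concave in $q$, so that the sign of the derivative at the corner fully determines the solution. Concavity follows from $u'' < 0$ together with the convexity properties of $C$: by Lemma \ref{lemma1}, $C' = MC$ is strictly increasing, so $C$ is strictly convex, which makes $w - C(q)$ concave and $w + q - C(q)$ concave (its second derivative is $-C''(q) < 0$); a nonnegative-weighted sum of increasing concave transformations of concave functions is concave, and strictly so. Hence on the yes branch: if $\pi_i > p$ then $f_y'(0^+) > 0$ and strict concavity yields a unique interior maximizer $q_y^i > 0$; if $\pi_i \leq p$ then $f_y'(0^+) \leq 0$ and the maximizer on the yes branch is $q_y^i = 0$. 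Symmetrically, on the no branch the maximizer is positive iff $\pi_i < p$ and is zero otherwise. Combining: when $\pi_i > p$ the yes branch strictly beats doing nothing while the no branch cannot, giving a positive yes position; when $\pi_i < p$ the mirror image gives a positive no position; when $\pi_i = p$ both branches are maximized at $0$, so the trader holds nothing. I should also invoke the Inada-type condition $\lim_{w_s \to 0^+} u'(w_s) = \infty$ to rule out the interior optimum pushing wealth to zero in some state (equivalently, to guarantee the interior critical point is a genuine maximizer in the admissible region), and note that the interior first-order condition, combined with $MC$ ranging over $(p, 1)$, pins the optimal quantity down uniquely.

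The main obstacle is making the "interior versus corner" dichotomy fully rigorous at the level of the combined problem rather than branch by branch: one must check that the objective is well-defined and the relevant one-sided derivatives exist at $q = 0$ (in particular that $C$ is differentiable from the right at $0$ with $C'(0^+) = p$, which is exactly Lemma \ref{lemma1}(i) for $MC$), and then verify that the two branches cannot simultaneously offer a strict improvement over $q = 0$ — which is immediate here since $f_y'(0^+)$ and $f_n'(0^+)$ have opposite signs (their sum is zero) and cannot both be positive. A secondary technical point is confirming that the supremum of the branch objective is attained (not merely approached as $q \to \infty$); Lemma \ref{lemma1}(iii), which says $AC(q) \to 1$, ensures that buying arbitrarily many shares drives state-contingent wealth down and utility to a finite limit dominated by the interior value, so coercivity/attainment is not an issue. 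Everything else is routine differentiation and sign-checking.
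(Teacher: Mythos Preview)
Your proposal is correct and reaches the same conclusion, but the mechanism for ruling out the ``wrong'' branch differs from the paper's. The paper eliminates, say, a no purchase when $\pi_i > p$ by an expected-value argument: using $AC(q_n) > y/(n+y)$ from Lemma~\ref{lemma1}, it shows any $q_n > 0$ has $\mathbb{E}[W] < w$, so a risk-averse agent strictly prefers doing nothing; the yes side is then handled exactly as you do, via the sign of the derivative at $q_y = 0$. You instead establish global strict concavity of each branch objective (from $u'' < 0$ and $C'' > 0$) and let the sign of $f'(0^+)$ do all the work on both sides. Your route is more uniform across the three cases and packages the argument cleanly, at the cost of proving concavity up front (something the paper defers to the proof of Lemma~\ref{lemma3}); the paper's route is slightly more elementary, needing only the average-cost bound and risk aversion rather than second-order properties of $C$. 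One small adjustment: attainment is immediate from compactness of the feasible set $[0, C^{-1}(w)]$ induced by the wealth constraint $C(q) \leq w$, so the coercivity argument via $AC(q) \to 1$ is unnecessary (and not quite applicable, since $q$ cannot in fact be taken arbitrarily large).
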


At a technical level, Lemma $\ref{lemma2}$ extends \cite{arrow1965theory}'s analysis of optimal betting to the case of non-linear pricing. As pointed out by \cite{arrow1965theory}, even risk-averse traders will choose to make a positive bet if this has positive expected value since they are locally risk neutral around $w_s = 0$. Since pricing is locally linear around $q = 0$, it is unsurprising that this argument extends to our case and that the `marginal price' at zero $n/(n + y)$ plays a key role in the argument.

Next, we examine how the behaviour of traders responds to an exogenous shock to the market price. More precisely, we suppose that an external participant makes a purchase of yes shares, thereby increasing $n$, decreasing $y$ and pushing up the marginal price $n/(n + y)$. The next result establishes how traders react qualitatively to the shock.

\begin{lemma}\label{lemma3}
Suppose that the marginal price increases from $p$ to $p + \Delta$. Then
\begin{itemize}
    \item Traders with $\pi_i \geq p + \Delta$ will decrease their holdings of yes shares.
    \item Traders with $\pi_i \in (p, p + \Delta)$ will switch from holding yes shares to holding no shares.
    \item Traders with $\pi_i \leq p$ will increase their holdings of no shares.
\end{itemize}
\end{lemma}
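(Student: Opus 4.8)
The plan is to read the middle case straight off Lemma~\ref{lemma2}, to reduce the first and third cases to one another by the yes/no symmetry of the model, and to prove the first case by a comparative-statics argument in which decreasing absolute risk aversion plays the decisive role. The middle case is immediate: before the shock the marginal price is $p<\pi_i$, so by Lemma~\ref{lemma2} a trader with $\pi_i\in(p,p+\Delta)$ holds a positive quantity of yes shares, while after the shock the marginal price is $p+\Delta>\pi_i$, so by the same lemma the trader holds a positive quantity of no shares --- a switch. The first and third cases are mirror images: exchanging yes and no shares (equivalently, replacing $\pi_i$ by $1-\pi_i$ and $p$ by $1-p$) turns the manipulator's purchase of yes shares into a downward shock to the relabelled marginal price, so a proof of the first case also yields the third. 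It therefore suffices to prove: if $\pi_i\ge p+\Delta$, the trader's optimal holding of yes shares falls (strictly when $\pi_i>p+\Delta$, and to zero when $\pi_i=p+\Delta$).

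First I would record what the shock does to costs. Since the manipulator's purchase is itself executed under the constant product rule, it preserves the product of the reserves: they move from $(y,n)$ to $(y',n')$ with $y'n'=yn$, $n'>n$ and $y'<y$. Rewriting the closed form for $C$ as $C(q)=\tfrac12\bigl(\sqrt{(n-y+q)^2+4ny}+q-n-y\bigr)$ and noting that $n+y=\sqrt{(n-y)^2+4ny}$, one sees that $C(q)$ --- and hence $MC(q)=C'(q)$ --- depends on the reserves only through the product $ny$ and the difference $n-y$, and that both are strictly increasing in $n-y$ for fixed $ny$ (for each $q>0$). As the shock fixes $ny$ and strictly increases $n-y$, this gives $MC_1(q)>MC_0(q)$ and $C_1(q)>C_0(q)$ for all $q>0$, where subscripts $0$ and $1$ denote pre- and post-shock objects: after the shock every additional yes share costs strictly more. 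By Lemma~\ref{lemma1} the objective $V(q)=\pi_i u(w+q-C(q))+(1-\pi_i)u(w-C(q))$ is strictly concave in $q$ (each state's payoff is a concave function of $q$, using $0<MC(q)<1$), and the Inada condition $\lim_{w_s\to 0^+}u'(w_s)=\infty$ excludes the upper corner, so before and after the shock the optimum is interior and characterised by the first-order condition; when $\pi_i=p+\Delta$ the post-shock optimum is zero directly by Lemma~\ref{lemma2}, so I would assume $\pi_i>p+\Delta$ and let $q_0^*>0$ be the pre-shock optimum. Since $V_1$ is strictly concave, it then suffices to show $V_1'(q_0^*)<0$.

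The final step is to estimate $V_1'(q_0^*)$ by comparison with $V_0'(q_0^*)=0$. If $C_1(q_0^*)\ge w$ then $q_0^*$ is post-shock infeasible and the conclusion is trivial, so assume $C_1(q_0^*)<w$. Write $w_y=w+q_0^*-C_0(q_0^*)$, $w_n=w-C_0(q_0^*)$, $\delta=C_1(q_0^*)-C_0(q_0^*)>0$ and $\mu=MC_1(q_0^*)-MC_0(q_0^*)>0$; the post-shock state-wealths at $q_0^*$ are then $w_y-\delta>0$ and $w_n-\delta>0$, and substituting these into $V_1'(q_0^*)$ and using the pre-shock first-order condition to eliminate $\pi_i\bigl(1-MC_0(q_0^*)\bigr)$ and $(1-\pi_i)MC_0(q_0^*)$ yields, for a constant $\lambda>0$,
\begin{equation*}
V_1'(q_0^*)=\lambda\left(\frac{u'(w_y-\delta)}{u'(w_y)}-\frac{u'(w_n-\delta)}{u'(w_n)}\right)-\mu\bigl(\pi_i u'(w_y-\delta)+(1-\pi_i)u'(w_n-\delta)\bigr).
\end{equation*}
The second term is strictly negative --- the substitution effect of the price increase. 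The bracket in the first term is the crux, and here I would invoke decreasing absolute risk aversion in the form that $x\mapsto u'(x-\delta)/u'(x)$ is strictly decreasing (its logarithmic derivative equals $A(x)-A(x-\delta)<0$, where $A=-u''/u'$), together with $w_y-w_n=q_0^*>0$, to conclude that this bracket is strictly negative too --- the income effect, signed by DARA. Hence $V_1'(q_0^*)<0$, so $q_1^*<q_0^*$, which with the reductions above proves the lemma.

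I expect the main obstacle to be exactly that first bracket. At a fixed quantity the shock makes yes shares dearer, which on its own pushes the optimal holding down, but it also leaves the trader with strictly less wealth in both states, an income effect whose sign is not pinned down by concavity of $u$ alone; it is the DARA assumption --- the prediction-market analogue of \cite{arrow1965theory}'s comparative statics for optimal betting --- that forces this income effect to reinforce rather than offset the price effect. The remaining ingredients (strict concavity of $V$, interiority of the optimum, and the pointwise upward shift of the cost and marginal-cost curves) are routine given Lemmas~\ref{lemma1} and~\ref{lemma2}.
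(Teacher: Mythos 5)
Your proposal is correct and reaches the same decomposition as the paper (middle case directly from Lemma~\ref{lemma2}, third case by yes/no symmetry), but the engine for the first case is genuinely different. The paper writes the first-order condition as $f(q,n)=g(q,n)$ with $y=k/n$ and applies the implicit function theorem, signing the four partials $\partial f/\partial n<0$, $\partial g/\partial n>0$, $\partial f/\partial q<0$, $\partial g/\partial q>0$ (the first of these is where DARA enters, and $\partial C'/\partial n>0$ is verified by an explicit formula); this yields the local statement $dq/dn<0$. You instead evaluate the post-shock marginal value $V_1'$ at the pre-shock optimum $q_0^*$, use the pre-shock FOC to collapse it to a DARA-signed ``income'' bracket plus a negative ``substitution'' term, and conclude by strict concavity of $V_1$. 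Your route handles the discrete shock directly (no need to integrate a local derivative), and your reparametrisation of $C$ and $MC$ in terms of $(n-y,\,ny)$ is a cleaner way of showing the shock raises costs pointwise than the paper's computation of $\partial C'(q)/\partial n$ with $y=k/n$; conversely, the paper's IFT computation delivers the monotone dependence of $q$ on the price as a by-product. The only point to tighten: your blanket claim that the Inada condition makes the pre- and post-shock optima interior fails for $\pi_i=1$, since then the zero-probability state carries no Inada penalty and the optimum is the corner $\bar q$ with $C(\bar q)=w$; the paper treats this case separately. Your argument survives because for $\pi_i=1$ one has $C_1(q_0^*)>C_0(q_0^*)=w$, so such traders always fall into your ``post-shock infeasible'' branch and the FOC-based elimination is never invoked for them --- but you should say this explicitly rather than assert interiority for all $\pi_i>p+\Delta$.
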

To sketch the proof, consider the case of $\pi_i \geq p + \Delta$. Ignoring the cases of $\pi_i = 1$ and $\pi_i = p + \Delta$ (which require a separate argument), one first shows that optimal purchases must satisfy a first order condition
\begin{equation}\label{eq_foc2}
\frac{\pi_i u'(w + q - C(q))}{(1 - \pi_i)u'(w - C(q))} = \frac{C'(q)}{1 - C'(q)}
\end{equation}
For any fixed $q > 0$, the shock increases marginal costs, thus shifting the right-hand side of (\ref{eq_foc2}) upwards. Meanwhile, under the assumption of decreasing absolute risk aversion, one can show that the shock shifts the left-hand side (i.e. the ratio of marginal utilities) downwards. Each of these forces separately decreases the optimal purchase, which means that traders will unambiguously choose to hold fewer yes shares.\footnote{As this discussion makes clear, the assumption of decreasing absolute risk aversion is unnecessarily strong. Without it, however, one would need to assume an appropriate bound on the income effect.}

More substantively, Lemma \ref{lemma3} underscores the importance of reversion. It is easy to check that traders who decrease their holdings of yes shares will put downward pressure on the market price. Likewise, the market price will fall if a trader increases their holdings of no shares. Thus, the model predicts that, following an exogenous increase in the price, the behavioural adjustment of an arbitrarily chosen trader will lead the price to revert towards its original value.

Although Lemma \ref{lemma3} captures the logic of reversion, it does not settle the quantitative question of how much prices should revert following a manipulation attempt. In particular, it does not tell us whether reversion should be complete. In addition, Lemma \ref{lemma3} does not describe the speed of reversion and the shape of the price path following the manipulation. To shed light on these more difficult questions, we now turn to simulations.

\textbf{Simulations.} To simulate the market under various conditions, we assume that $u(w_s) = \ln(w_s)$; note that this functional form satisfies our maintained assumptions on $u$. One advantage of the logarithmic specification is that it allows the optimal trades to be solved explicitly. For example, if $\pi_i > n/(n + y)$, then the optimal spend on yes shares is
\begin{equation}
x^* = \frac{n \left(\sqrt{y [(n+4 (1 - \pi_i) \pi_i w)y +4 (1 - \pi_i) \pi_i w (n+w)]}-2 (1 - \pi_i) w-y\right)}{2 (1 - \pi_i) (w+y)}
\end{equation}
and the optimal purchase $q$ may be computed using \eqref{eq_q}. Since the simulations involve solving well over $10^7$ optimisation problems (see below), avoiding numerical optimisation delivers an important reduction in speed.

In all simulations, the market is initialised and given $t$ periods to reach a stable state. As in our experiments, an external participant then purchases the number of yes shares required to increase the price by $0.05$; the market is then given $t'$ periods to adjust. At each time, one trader is randomly selected to re-adjust her holdings. This introduces randomness into the process; thus, we run each simulation 10,000 times and report the average of the realised prices at every time.\footnote{All simulations can be reproduced using \href{https://github.com/Itzhak95/prediction_markets_model}{this code}.}

In the \textit{baseline case}, we assume that $m = 10$ (i.e. that there are 11 traders), that each trader has wealth $w = 100$, and that the initial reserves are $(n, y) = (1000, 1000)$. We also assume that trader $i$ has a belief $\pi_i = i/m$ for $i \in \{0, 1, ..., m\}$ (i.e. that beliefs are uniform). Finally, we set $t = t' = 100$; this turns out to give the market prices more than enough time to stabilise following the shock. The results are shown in Panel A of Figure \ref{fig:res_sim}. As can be seen, the average price is stable at $0.5$ before the shock; this is unsurprising given that the initial reserves are balanced ($n = y$) and that beliefs are symmetric around $0.5$. After the shock increases the price to $0.55$, the behavioural responses of the agents induce the price to start moving back towards its original value of $0.5$. However, reversion takes time; and even once prices have stabilised, they have only reverted by around 40\%.

It is natural to wonder how these results depend on the assumed market conditions; we thus now report some variations. First, we vary the number of traders, considering $m \in \{10, 20, ..., 60\}$. Second, following \cite{manski2006interpreting}, we allow traders to revise their beliefs in light of the market price $p$. Specifically, we suppose that posterior beliefs are given by $\pi_i' = \lambda p + (1 - \lambda)\pi_i$,  where $\lambda \in [0, 1]$ is the `learning rate'; we consider $\lambda \in \{0, 0.2, ..., 1\}$. Finally, we vary the degree of prior agreement amongst the traders. Specifically, we assume that trader $i$ has a belief $\pi_i = \alpha (i/m) + (1 - \alpha)0.5$ where $\alpha \in [0, 1]$; note that that this nests the baseline case of uniform beliefs $(\alpha = 1)$ along with the `opposite' case of a common prior belief of $0.5$ ($\alpha = 0$). We run separate sets of simulations for all $\alpha \in \{0, 0.2, ..., 1\}$.

Figure \ref{fig:res_sim} displays the most `extreme' version of each variation; see also Table \ref{tab_full_sim} for a more detailed analysis that describes the degree of reversion under all the parameter combinations that we consider. Several results are apparent. First, increasing the number of traders $m$ increases the speed of reversion in the short run as well as increasing the total magnitude of reversion once prices have stabilised. This is intuitive: increasing the number of traders increases the amount of wealth at their disposal, thereby making it easier for them to undo the impact of the manipulative trade. Second, increasing the learning rate $\lambda$ makes the manipulation more `sticky'; indeed, in the extreme case of $\lambda = 1$, no reversion is observed whatsoever. Intuitively, increasing $\lambda$ means that the manipulative trade is `more persuasive' (i.e. leads to a larger update to trader beliefs) and thus has a larger long-run effect on the market price. Finally, increasing the amount of prior agreement amongst traders makes the market harder to manipulate. In the extreme case where all traders have a common belief of $0.5$, moving the price away from $0.5$ is especially difficult.

\begin{figure}[t!]
\centering
\caption{Price adjustment under different market conditions}
\label{fig:res_sim}
\begin{subfigure}[t]{0.49\linewidth} 
    \centering
    \includegraphics[width=\linewidth]{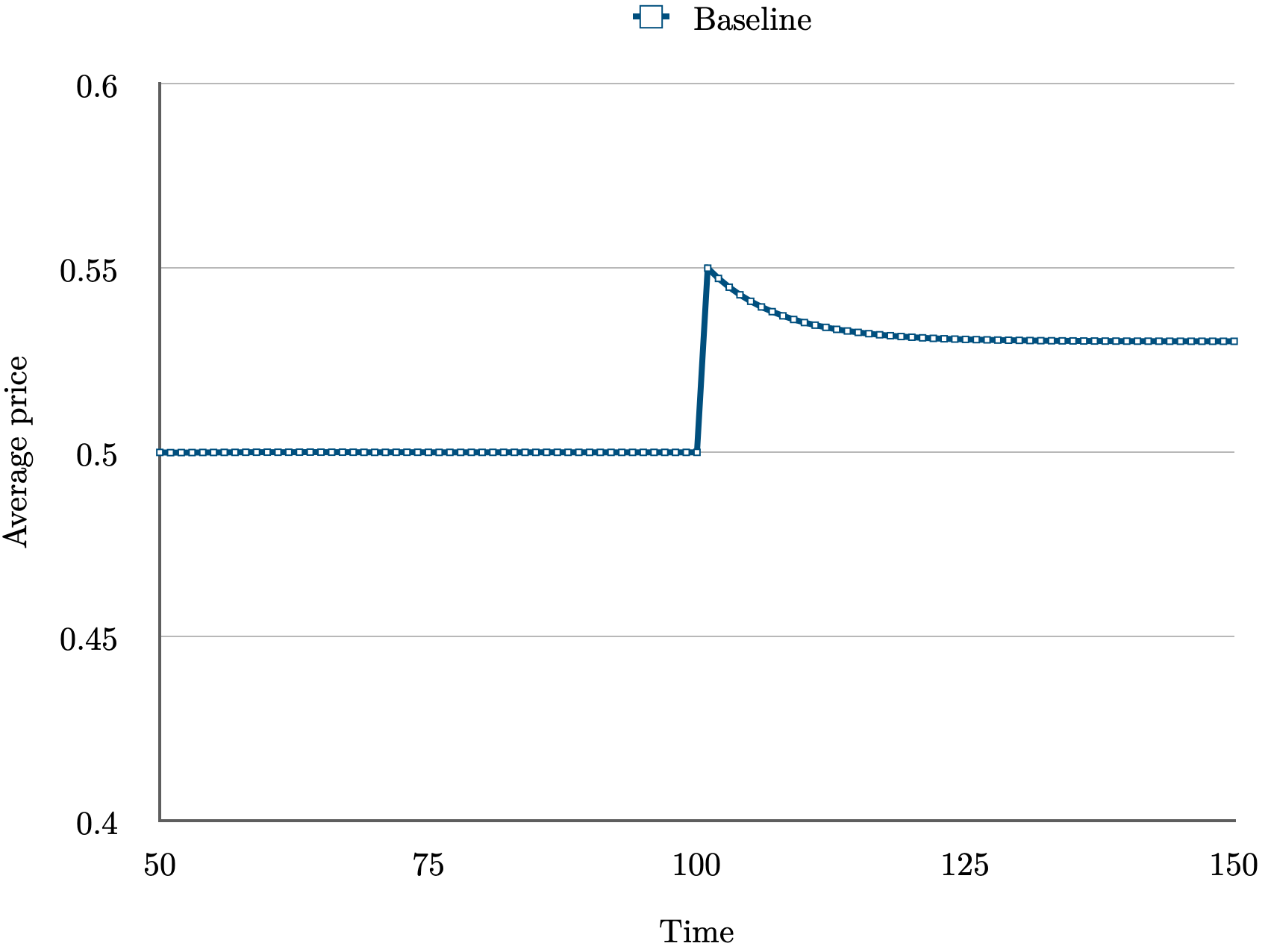}
\caption{Baseline}\label{panel_a}
\end{subfigure}
\hfill
\begin{subfigure}[t]{0.5\linewidth} 
    \centering
    \includegraphics[width=\linewidth, trim=0cm 0.5cm 0cm 0cm, clip]{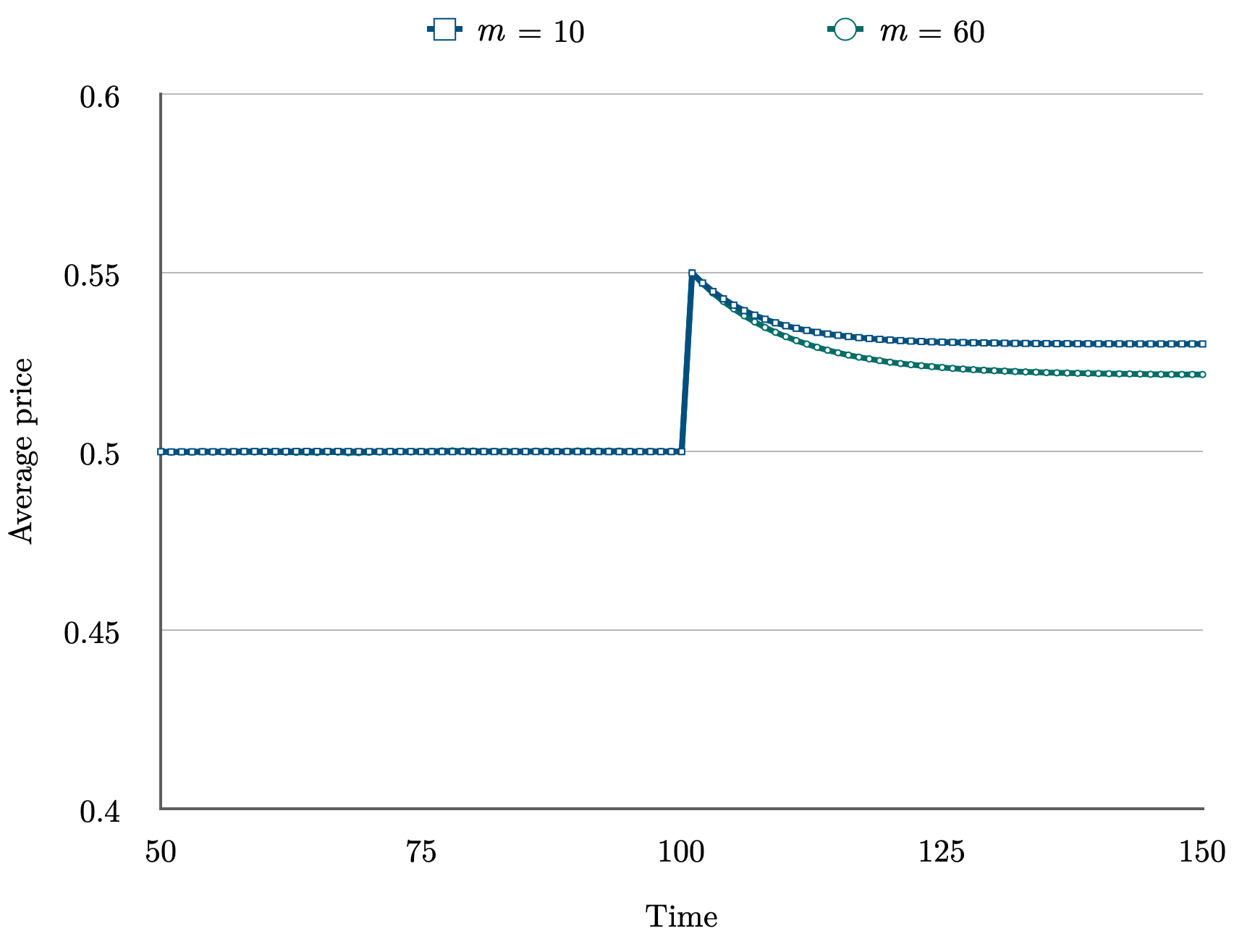}
    \caption{Varying the number of traders}
\end{subfigure}
\vspace{-.5cm} 
\begin{subfigure}[t]{0.49\linewidth} 
    \centering
    \includegraphics[width=\linewidth]{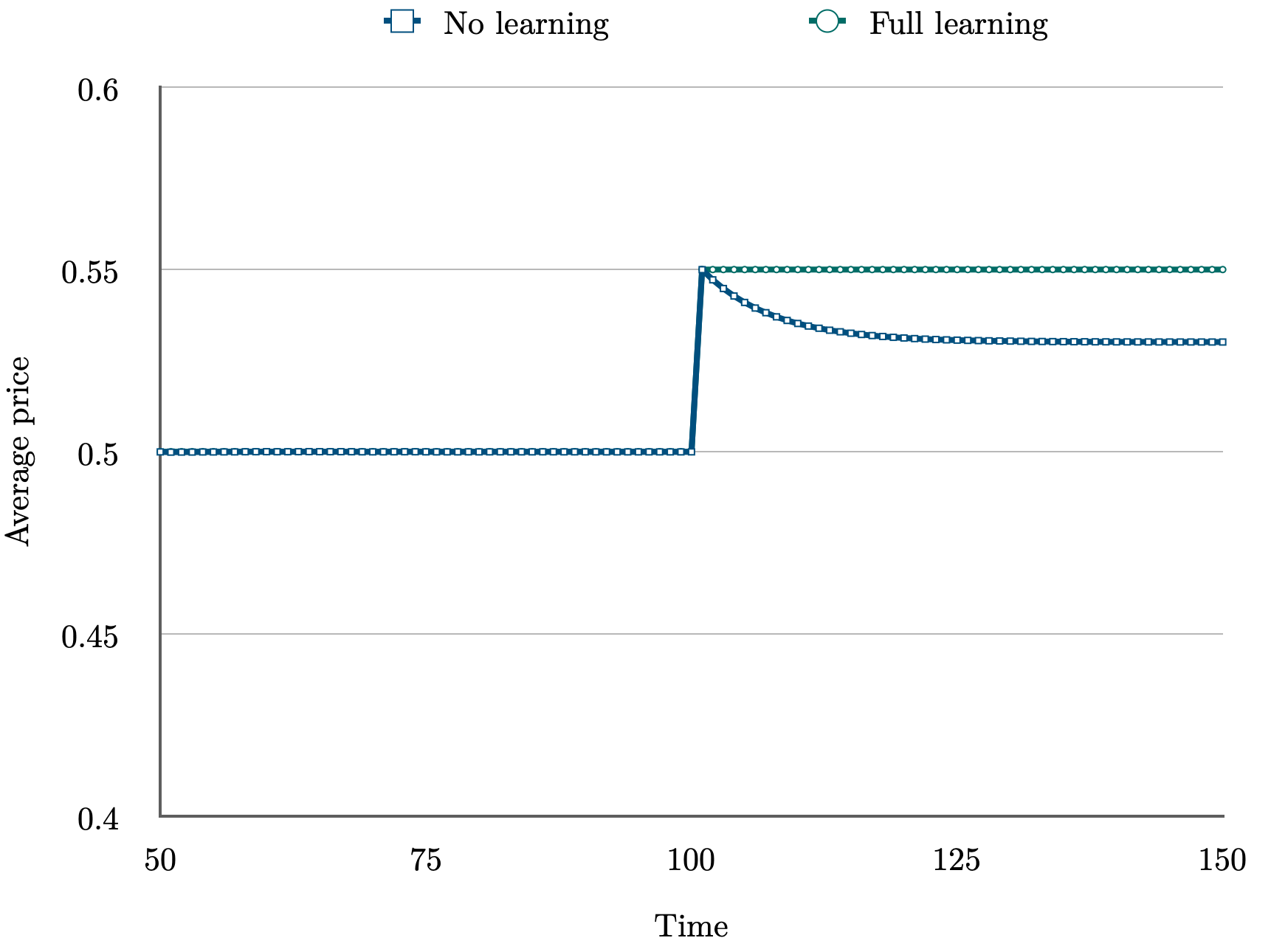}
    \caption{Varying the learning rate}
\end{subfigure}
\hfill 
\begin{subfigure}[t]{0.5\linewidth} 
    \centering
    \includegraphics[width=\linewidth, trim=0cm 0.2cm 0cm 0cm, clip]{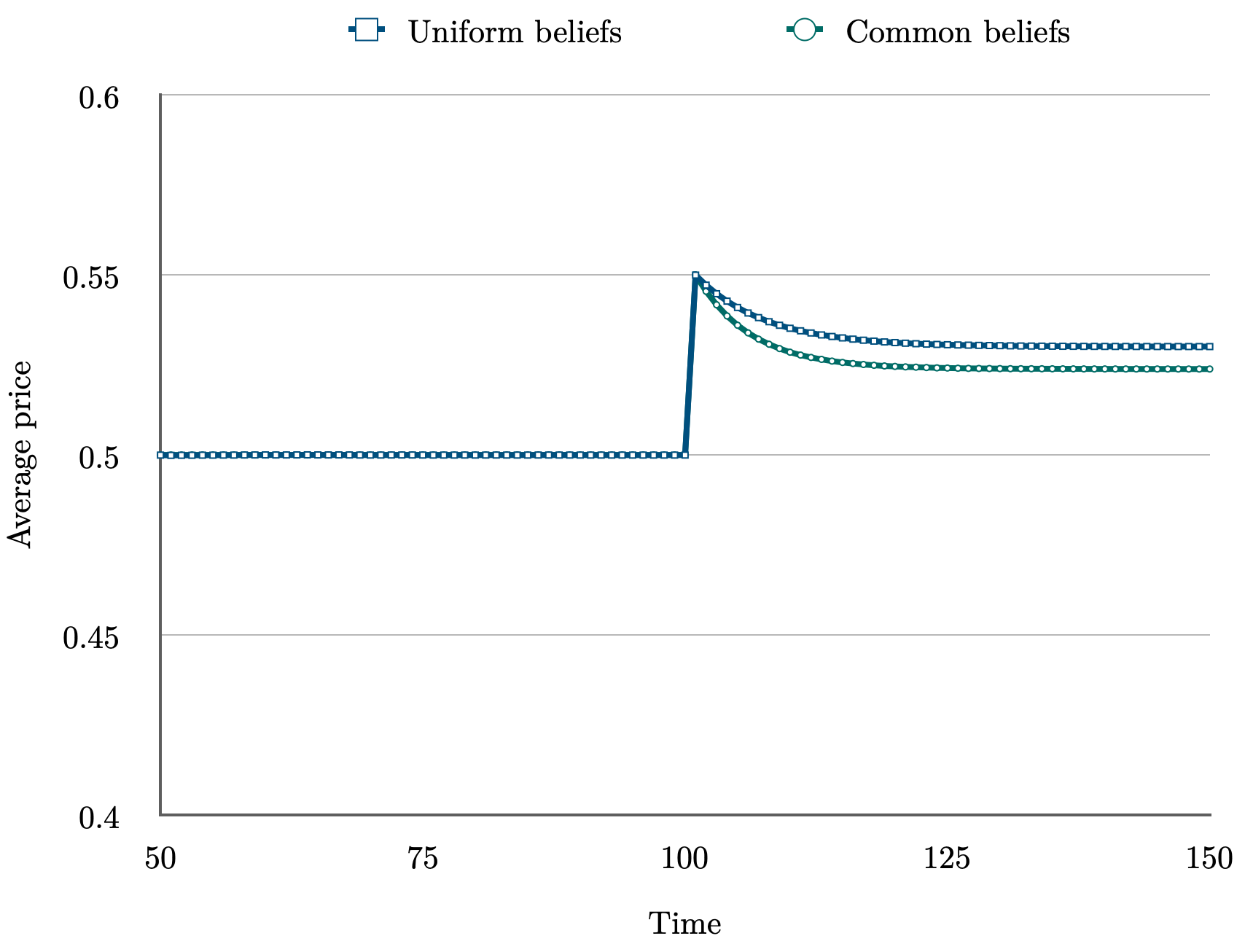}
    \caption{Varying prior agreement}
\end{subfigure}
\begin{minipage}{\linewidth}
\vspace{0.8cm}
   \footnotesize \textit{Notes}. This figure plots the average price (across 10,000 simulations) from time $t = 50$ to $t = 150$ under different market conditions. Panel (a) displays the baseline simulations; Panel (b) compares $m = 10$ with $m = 60$; Panel (c) compares $\lambda = 0$ and $\lambda = 1$; and Panel (d) compares uniform beliefs with a common belief.
\end{minipage}
\end{figure}

Taken together, the model outlined in this section generates three key predictions concerning the impact of price manipulation by an external market participant. First, the model predicts that manipulation can systematically affect the market price --- both in the short run (due to the need for behavioural adjustment), but also in the long-run (e.g., due to `learning effects'). Second, however, the model also predicts that the effect of manipulation should be somewhat `undone' by future trades; this dynamic is captured qualitatively by Lemma \ref{lemma3} and quantified by the simulation results. Third, the model suggests that the degree of reversion should vary by market type. In particular, markets with more traders, a lower degree of learning, and more prior agreement should be harder to manipulate. We will examine the extent to which these predictions are borne out by the data in the subsequent sections.

\section{Institutional background}\label{sec_background}

\textbf{Manifold markets.} We conducted our experiment on the \href{https://manifold.markets/}{Manifold Markets} platform, which was founded in 2021. During our experiment, Manifold was the largest prediction market website in the world as measured by the total number of markets hosted on the platform.\footnote{As of August 16 2024, there were 10 markets on PredictIt, 312 markets on Kalshi, 504 markets on Polymarket, and well over 2,000 markets on Manifold.} In addition, Manifold was one of the largest prediction market platforms as measured by number of users, with around 10,000 `active' users at the start of the experiment in December 2023 \citep{manifold_stats}.\footnote{According to Similarweb estimates of website traffic, Manifold was the second most viewed prediction market website in April 2024. Specifically, the page view estimates are 1,145,000 for PredictIt, 673,000 for Manifold, 531,000 for Polymarket, and 177,000 for Kalshi.} The large number of markets on Manifold was crucial to the design of our large-scale field experiment; such an experiment would not have been possible using more traditional platforms (e.g. the IEM) that have previously been studied in the academic literature \citep{wolfers2004prediction}.

\textbf{Trading.} Figure~\ref{fig:trump_mkt} provides an example of a particular market on the platform. Traders can view the full history of market prices and are given the opportunity to purchase yes or no shares. If they choose to make a trade, they are shown how much this will cost and how much it will move the (marginal) market price. Each market is equipped with a closing date, which usually corresponds to the time at which the relevant question will be resolved. Below the price chart, traders can read a description of the exact resolution criteria as well as any comments that have been made by the traders over the course of the market's lifespan.

\textbf{Pricing.} Trading within each market is conducted via an AMM that implements Maniswap, a generalisation of the constant product rule that was studied in Section \ref{sec_theory}. Under Maniswap, the AMM adjusts its reserves $(y, n)$ in order to hold the expression $y^p n^{1 - p}$ constant. The parameter $p \in (0, 1)$ allows markets to be initialised at prices other than 50\% without `wasting' shares (see \citealp{maniswap_24} for the details). Notice that, in the common case where a market is initialised at 50\% (corresponding to $p = 0.5$), the Maniswap algorithm exactly reduces to the constant product rule. Moreover, one can verify that an (appropriately modified) version of Lemma \ref{lemma1} holds under Maniswap, which allows one to establish analogous versions of Lemmas \ref{lemma2} and \ref{lemma3} in this generalised environment.\footnote{More precisely, all results continue to hold after one has changed the `marginal price' from $n/(n + y)$ to $n p/(n p - p y + y)$.}

\textbf{Users.} According to an informal survey, which should be treated with some caution due to the possibility of self-selection, Manifold users appear to skew heavily male \citep{plasma_survey_2024}. The majority of users reside in the United States; and these users are substantially more likely to vote for the Democratic than the Republican party. Most users stated that they were either aligned with or interested in the `rationalist' movement; interest in and alignment with the `effective altruism' movement are also common.

\textbf{Innovations.} In some respects, Manifold is an unusual platform. First, most markets are user-created and user-resolved. This allows Manifold to host a large number of markets, which is crucial for the feasibility of our large-scale experiment. Second, Manifold provides rich data on each market's characteristics, which facilitates our heterogeneity analysis. Third, a large portion of trade is conducted by `bots' (i.e. automated trading algorithms) that attempt to exploit inefficiencies in the markets. Fourth, markets are run on a platform-specific currency (`Mana'). While Mana (M) can be purchased using real currency (at a rate of \$1 = 100M during the experiment), it is not possible to convert Mana back into dollars.

\textbf{Incentives.} Despite this latter feature, there are several incentives to earn Mana through trading. First, during the experiment, Mana could be converted into donations to a charity of the user's choice at the same \$1 = 100M rate; in total, Manifold users had raised \$316,000 for charity through this mechanism as of 16 May 2024. This incentive is plausibly especially effective given that many Manifold users are interested in `effective altruism' and thus may view money and charitable donations as fungible. Second, Manifold ranks users based on their monthly earnings: this may activate social-image incentives. Third, Manifold provides users with personalised information about their trading performance and predictive accuracy through Brier scores and calibration charts; this information may also tap into self-image incentives \citep{benabou2006incentives}. Taken together, these incentives help explain why Manifold had over 10,000 active users during the experiment despite running on a platform-specific currency. As one especially successful trader on Manifold put it \citep{interview},

\begin{quote}
In the unusual world in which I find myself, for better or worse, doing well on a prediction markets website is somewhat of a badge of honour . . . I wish I had more noble motivations but, alas, I think that’s a good chunk of it . . . Another important motivation for me using Manifold relates to charitable giving.
\end{quote}

The importance of the financial incentives is also underscored by the `devaluation' of Mana that took place in April 2024. On 23 April, it was announced that Manifold would reduce the rate at which Mana could be converted to charitable donations by a factor of 10. However, users wishing to avoid the devaluation were given a one week `grace period' in which they could donate their earnings at the old (and substantially better) rate.\footnote{This grace period was ultimately extended to 15 May 2024.} Figure~\ref{fig:dev} shows the impact of the announced devaluation on donations to GiveWell's maximum impact fund, which is the charity that received the largest value of donations from Manifold users. As can be seen, the number of donations increased by over a factor of 100 in the week following the announcement; whereas the value of donations increased by over a factor of 200. This suggests that, although Manifold users may be motivated by social and self-image incentives, they are also responsive to the financial incentives offered by the platform.

\begin{figure}[H]
    \centering
    \caption{Calibration\vspace{-0.5em}}\label{fig:calibration}
    \includegraphics[width=0.8\textwidth, keepaspectratio]{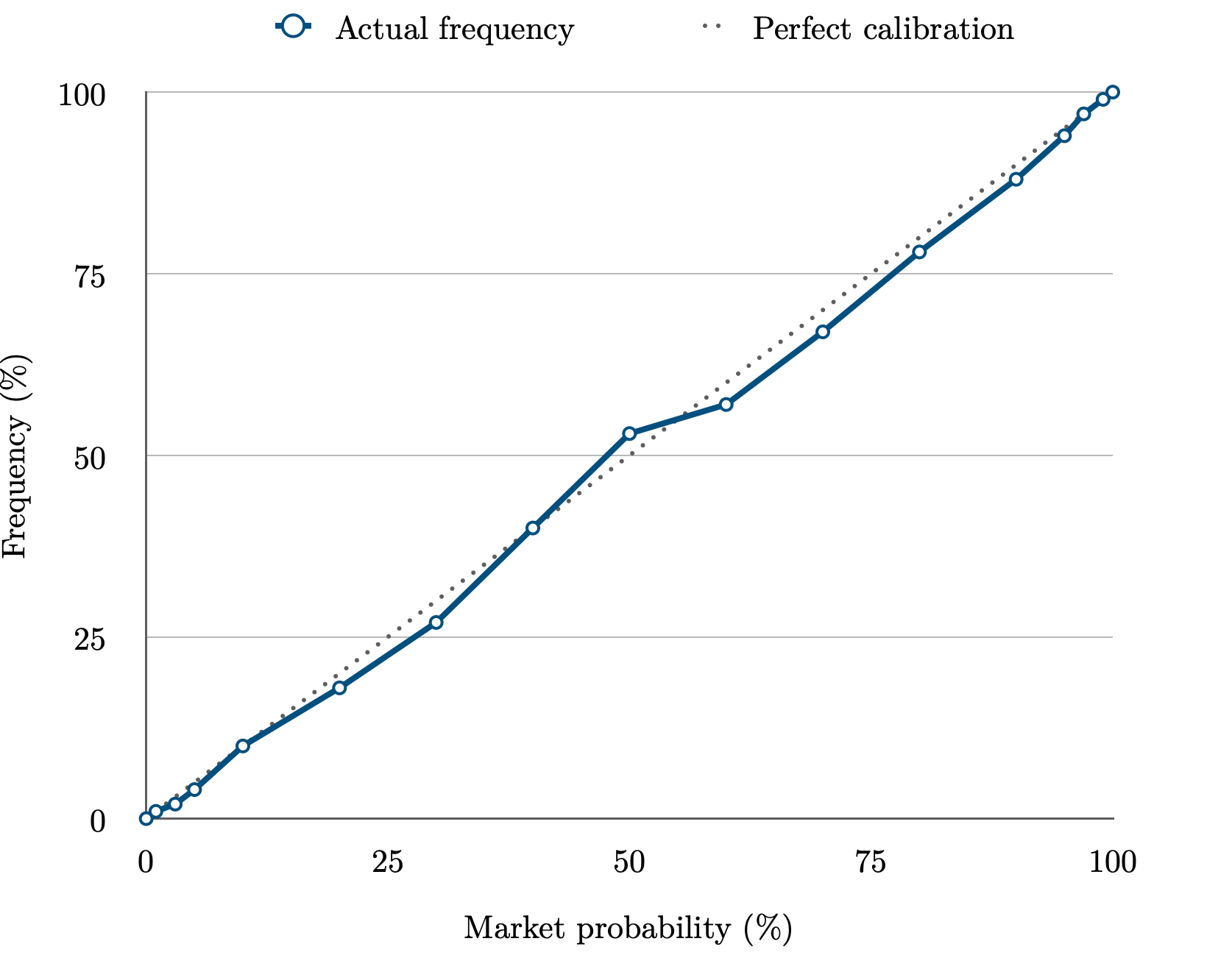}
    \begin{minipage}{10.5cm}
    \vspace{0.3cm}
    \footnotesize \textit{Notes}. This figure shows calibration on Manifold as of 20 November 2023, i.e. the relationship between the market price and the probability that a market resolves `yes'. Source: \cite{calibration}.
\end{minipage}
\end{figure}

\textbf{Predictive performance.} Given these incentives, it is not surprising that the predictive performance of Manifold is comparable to that of more traditional platforms. First, as shown by Figure \ref{fig:calibration}, the markets are generally well-calibrated: for example, markets on Manifold with prices close to 40\% have historically resolved `yes' around 40\% of the time. Second, in a study of the 2022 US midterm elections, Manifold outperformed the more traditional prediction markets in the sample \citep{First2024}. Third, Manifold achieves Brier scores that are comparable but slightly worse than Metaculus \citep{EAForum2024}. Thus, based on existing evidence, it appears that Manifold markets operate in broadly comparable ways to more traditional prediction markets. This is unsurprising in light of previous work on the role of money in prediction markets. As \cite{servan2004prediction} find, `the essential ingredient' for prediction market accuracy `seems to be a motivated and knowledgeable community of traders'---which Manifold certainly appears to have---as opposed to whether the traders are motivated purely by financial incentives.

\section{Experimental design}\label{sec_design}

\textbf{Intervention.} We now outline the structure of our experiment. As discussed earlier, the basic idea was to randomly shock the prices of a large number of prediction markets. More specifically, we either purchased yes shares until the price rose by 5 percentage points (the `yes' group), purchased no shares until the price fell by 5 percentage points (the `no' group), or did nothing (the control group). Each given market was equally likely to be assigned to any of these groups. We then observed the prices in every market in order to study whether the effects of our shocks fade over time.

As explained in the pre-registration \citep{pre-reg}, we focus on the difference in mean prices between the `yes' and `no' groups since this is our best-powered comparison. Two points should be noted. First, since the treatment is randomly assigned, the difference in average prices between these groups has a causal interpretation. Second, if prices in the markets can be approximated by a random walk, then one would expect the variance of the prices (and thus our standard errors) to increase over time. For this reason, we focus on effects within 7 days of our intervention; although we also conduct analyses with longer time horizons.

\textbf{Exclusion criteria.}  We only bet on binary markets, i.e. markets that must resolve as `yes' or `no' (or N/A). We excluded various types of market from our sample. First, we excluded markets that do not resolve based on an external event by the end of 2025. While very long-run markets (e.g. resolving in 2030) may be relatively easy to manipulate, they are also less representative of the markets that operate on other platforms. Second, since very small markets are less likely to form the basis for influential probability estimates, we excluded any market that had fewer than 10 traders at the time of our bet. Third, to obtain a large sample with our experimental budget, we excluded any market which would have cost over 200M to move in either direction by 5 percentage points.\footnote{In practice, the effect of this restriction is to reduce the number of markets in our sample with initial prices close to 0\% to 100\%. To see why this is true, note that, under the constant product rule, it is infinitely expensive to increase the price from 95\% to 100\% (or to decrease the price from 5\% to 0\%).} Fourth, since markets can be volatile at the start of their life cycle (which increases standard errors), we excluded markets that started within the last 7 days; to satisfy our data collection requirements, we also excluded markets that ended within 30 days of our bet. Fifth, to reduce the scope for arbitrage and informational effects (discussed in detail later), we excluded any market that was closely related to another market that was already in our sample (e.g. “Will Trump win?” vs “Will Trump lose?”).

\textbf{Data.} We collected hourly price data, starting 24 hours before the bet and continuing for 30 days after the bet. Since we collected data on 817 markets, this means that we have $817 \times 24 \times 31 \approx 600,000$ price observations in total. We also collected rich data on each market's characteristics. First, we recorded the number of traders within each market at the time of our bet; our model suggests that this variable should influence the market's manipulability. Second, we recorded whether each market's question was also present on another prediction platform (Metaculus) that is frequently discussed on Manifold and sometimes used as the source for Manifold questions. If a question is on Metaculus, then one would expect that `price learning' should be lower and thus the market should be harder to manipulate (see Section \ref{sec_theory}). Third, we collected three measures of the `activity of a market' --- the total volume of trade, the volume of the trade within the last 24 hours, and the number of comments left by traders. Since our model identifies a `time period' with a trade, it suggests that more active markets should effectively run on a faster `clock speed' and thus revert more quickly. For completeness, we also recorded some less interesting variables that were available on the platform; see \cite{pre-reg} for a full description.

\textbf{Power.} Based on the calculations in \cite{pre-reg}, we needed $n = 849$ markets in order to have 90\% power to detect a difference of 3 percentage points between the average prices in the `yes' and `no' groups after 1 week.\footnote{When analysing the data, we realised that some of the $849$ markets on which we had bet inadvertently violated at least one of our exclusion criteria. After discarding these markets, we ended up with $n = 817$ markets in our sample. We should emphasise that our main results are entirely unaffected by the inclusion (or lack of inclusion) of these markets.} Since the observed reversion was ultimately substantially lower than we had anticipated, this sample size turned out to be unnecessarily large for this purpose. However, the size of our sample remains useful for our heterogeneity and longer-run analyses.

\textbf{Timelines}. We pre-registered our experiment (with an analysis plan) in December 2023.\footnote{In general, we conformed closely to the analyses that we had pre-registered. However, we ultimately deviated from the plan in a few (generally minor) ways. All deviations from the pre-registered plan can be viewed in our pre-registration; see the document `Analysis plan (updated after the experiment)'.} We made the bets over a 5 month period from December 2023 until April 2024. We concluded our main data collection in May 2024 (30 days after the final bet).

\section{Experimental results}\label{sec_results}

\subsection{Sample}

We now describe the results of the experiment, beginning with an overview of the markets in our sample. Table \ref{tab:desc_stat} reports some descriptive statistics. As can be seen, around 9\% of the market questions could also be found on the Metaculus platform. On average, users had left 3.5 comments at the time of our trade; however, this varies substantially across markets with a minimum of 0 and a maximum of 83. At the time of our trades, 27 traders had participated in each market on average. If one considers the markets that had resolved as of 10 July 2024, one sees that this number had risen to 47 traders on average by the end of the markets' lifespans. Meanwhile, the total number of trades had increased on average to 136, with a minimum of $0$ and a maximum of $1,400$.

\begin{table}[H]
\begin{threeparttable}
\caption{Descriptive statistics}\label{tab:desc_stat}
\vspace{-0.5em}
\begin{tabular}{lcccc}
\hline
Variable       & \hspace{1.5em}Mean\hspace{1.5em}     & Std. dev. & Minimum & Maximum   \\ \hline
24 hour volume & 24.7   & 119.9            & 0   & 1,701  \\
Total volume   & 1,879 & 4,279           & 89  & 56,485 \\
Metaculus      & 0.093    & 0.291              & 0   & 1     \\
Comments       & 3.53    & 7.26              & 0   & 83    \\
Traders        & 27.0   & 25.0             & 10  & 300   \\ \hline
Final traders  & 47.1   & 44.9             & 10  & 321   \\
Final trades   & 136.0  & 189.7            & 13  & 1,400  \\ \hline
\end{tabular}
\begin{tablenotes}
\footnotesize
\item \hspace{-0.2em}\textit{Notes}. This table shows descriptive statistics for the markets in our sample. The first 5 variables show statistics for the full sample ($n = 817$). The last two variables show statistics for the 127 markets that had resolved as of 10 July 2024.
\end{tablenotes}
\end{threeparttable}
\end{table}

Table \ref{tab:topics} provides an overview of the main topics addressed by the markets in our sample; see also Fig \ref{fig:cloud} for a visual illustration. As can be seen, markets on politics (especially the US 2024 Presidential election) and political conflict (especially the conflicts in Gaza and Ukraine) are common. In addition, many markets concern artificial intelligence and the speed of its development. Our sample also includes a variety of markets on sport, popular culture (e.g. the outcomes of Oscar nominations), and various macroeconomic indicators (e.g. inflation and interest rates).

\subsection{Main results}

\textbf{7 day effects.} We now examine the extent to which our manipulations had a persistent effect over time. To obtain some initial graphical evidence on this question, Figure \ref{fig:avg_prices} plots the average price in each of the three treatment groups, starting 24 hours before the trade and continuing for 7 days after the trade. As can be seen, the average prices in each group are stable in the 24 hours before the trade. By design, our intervention immediately increases the average price in the `yes' group by 5 percentage points and immediately decreases the average price in the `no' group by 5 percentage points, thus creating a gap of 10 percentage points in average prices between the groups.\footnote{Since our data is hourly, there is some scope for reversion even in the hour in which the trade is made: for this reason, the initially observed impacts of the yes and no shocks are slightly below 5 percentage points.} As time goes by, the gap between the `yes' and `no' groups becomes smaller, implying that some reversion is taking place. However, even after 7 days have elapsed, a substantial gap of approximately 7.3 percentage points can be observed between groups.

\begin{figure}[H]
    \centering
    \hspace{0cm}\includegraphics[width=0.9\textwidth]{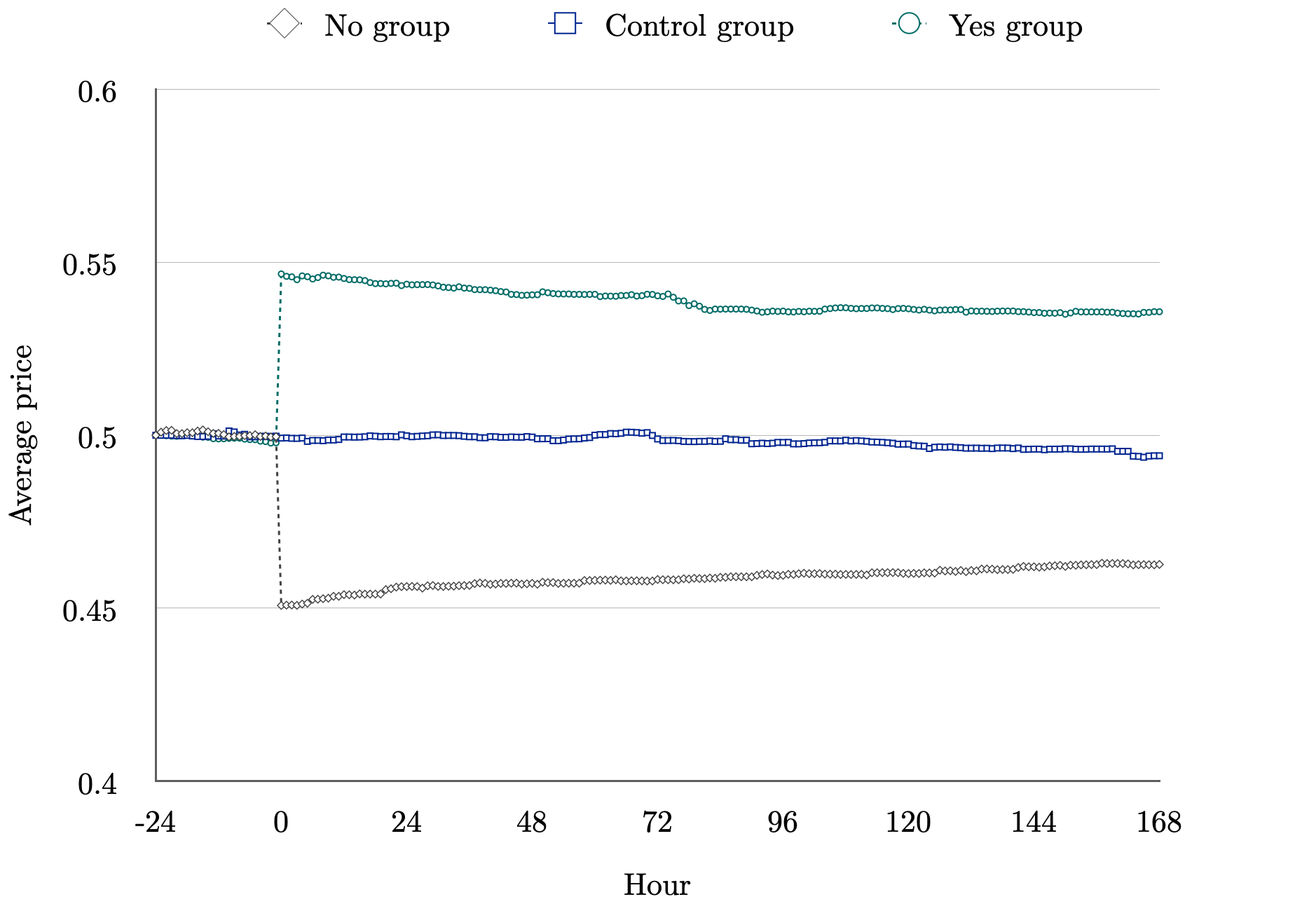}
    \caption{Average prices over time}
    \label{fig:avg_prices}
    \hspace{0cm}\begin{minipage}{0.75\textwidth}
    \vspace{0.3cm}
    \footnotesize \textit{Notes}. This figure shows average prices over time in the three treatment groups. All series are rebased (by adding a constant) so that they start at $0.5$.
\end{minipage}
\end{figure}

To examine this issue more formally, we estimate models of the form
\begin{equation}\label{eq_reg_main}
p_{t, i} = \beta_0 + \beta_1 \mathbbm{1}_{Y, i} +  \beta_2 \mathbbm{1}_{C, i} + \beta_3 p_{-1, i} + u_i
\end{equation}
where $p_{t, i}$ is the price in market $i$ at time $t$, $\mathbbm{1}_{Y, i}$ is a dummy variable that equals $1$ if market $i$ is in `yes' group, $\mathbbm{1}_{C, i}$ is a dummy variable that equals $1$ if market $i$ is in the control group, and $p_{-1, i}$ is the price in the market in the hour before our bet. Thus, the `no' group is the omitted category. As explained in our pre-registration, we include the baseline control $p_{-1, i}$ to increase statistical power. In subsequent robustness checks, we also control for the full suite of baseline variables that are available in our dataset.

Figure \ref{fig:yes_no} presents the estimated $\hat{\beta_1}$ coefficients obtained from estimating this regression for all $t \in \{0, 1, ..., 167\}$. Note that this estimate is simply the difference in average prices between the `yes' and `no' groups, correcting for any baseline imbalance in the $p_{-1, i}$ variable. Again, one can see that the difference starts at around 10 percentage points (immediately after the intervention) and declines over the course of the 1 week period. However, there is a clear difference between the groups of around 7.5 percentage points even after 1 week. As one can see from the 95\% confidence intervals (also depicted in the figure), this difference is statistically distinguishable from zero ($p < 0.01$).

\begin{figure}[h]
    \centering
    \hspace{0cm}\includegraphics[width=0.9\textwidth]{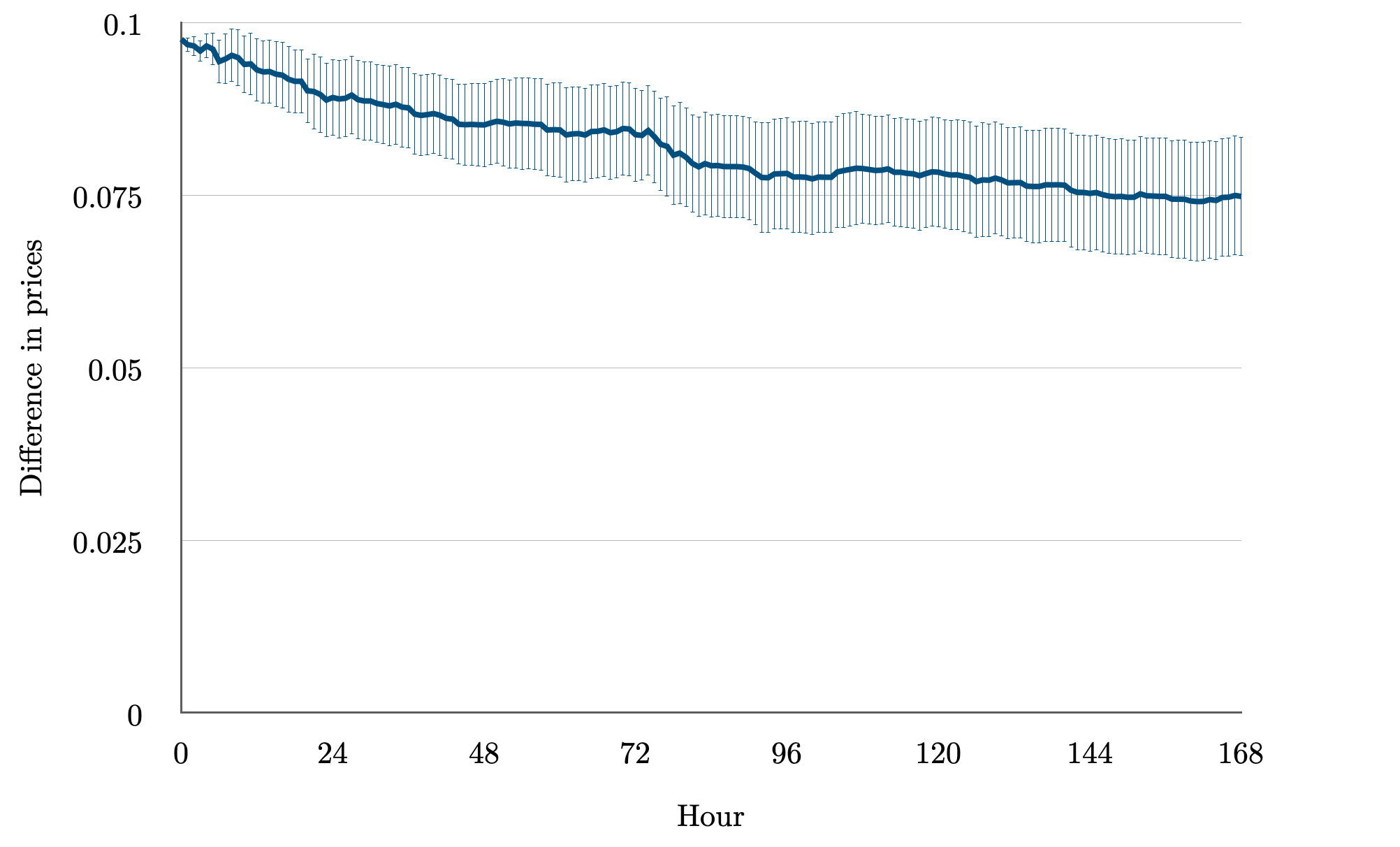}
    \caption{Comparing the `yes' and `no' groups ($\hat{\beta}_1$)}
    \label{fig:yes_no}
    \hspace{0.2cm}\begin{minipage}{0.7\textwidth}
    \vspace{0.15cm}
    \footnotesize \textit{Notes}. This figure plots the $\hat{\beta}_1$ coefficients obtained from estimating \eqref{eq_reg_main} for all $t \in \{0, 1, ..., 167\}$. The 95\% confidence intervals are computed using robust standard errors.
\end{minipage}
\end{figure}

Figure \ref{fig:no_control} presents analogous results for the $\hat{\beta_2}$ coefficients, i.e. the comparison of the `no' group with the control. As can be seen, the results are broadly similar: some reversion is observed, but a sizeable (and statistically significant) difference between the groups can be observed even after 7 days. Note that both of these findings --- some reversion, along with a persistent effect of our manipulations --- are consistent with the model that is outlined in Section \ref{sec_theory}.

Before proceeding, we use the control group to investigate whether `yes' and `no' bets have symmetric effects (as one might expect). To do this, notice that the hypothesis of symmetry is equivalent to the hypothesis that $\beta_1 = 2\beta_2$, which can be tested as a linear restriction on the statistical model. Using an $F$-test, we are unable to reject this restriction at conventional significance levels ($p = 0.124$), which suggests that the observed effects are reasonably symmetric.

\textbf{30 day effects}. Although our pre-registration focused on 7 day effects for reasons of statistical power, one can also conduct similar analyses for the full 30 day horizon by estimating the regression for all $t \in \{0, 1, ..., 719\}$. As shown by Figure \ref{fig:yes_no_30d}, one observes some additional reversion over the longer time horizon. However, even after 30 days, the estimated difference between the `yes' and `no' groups has only fallen to 6.9\%. Remarkably, even though standard errors have unsurprisingly grown due to the accumulation of shocks of time, the effect remains highly significant ($p < 0.01$). It is also worth noticing that the speed of reversion significantly slows over time: we observe 25\% reversion in the first week, but just an additional 6\% reversion over the next three weeks. This reduction in reversion speed is exactly the pattern that is predicted in the model of Section \ref{sec_theory} (see Figure \ref{fig:res_sim}).

\textbf{60 day effects.} Although we had originally planned to only collect data for 30 days after the shocks (as outlined in our analysis plan), the strength of the results after 30 days suggested that very long-run effects might be observable. As a result, we decided to additionally record the price in every market 60 days after the bets. Note that, although our dataset contains hourly data over the course of the first 30 days, we only recorded a single `price snapshot' to measure 60 day effects.

Table \ref{tab:60_day} displays the results. As one would expect, the standard error of $\hat{\beta}_1$ has risen further over time (increasing from .004 after 1 week to .009 after 30 days and .014 after 60 days). Despite this, however, significant effects (with $p < 0.01$) are clearly visible in the data even 60 days after the bets have been made. As can be seen from Column (1), the coefficient $\hat{\beta}_1$ has fallen to .056; that is, prices have now reverted by approximately 44\%. However, this includes markets that had already resolved within 60 days; and one would not expect to be able to manipulate the `60 day price' of these markets (which should generally end up close to 0\% or 100\% regardless of our intervention). Indeed, no significant effects are observed on the `60 day price' of already resolved markets; see Column (2). Once these markets are excluded from the sample, the coefficient on $\hat{\beta}_1$ becomes .059, which suggests that only 41\% of reversion has taken place; see Column (3) for the details. Thus, the effects of our manipulations can be clearly seen in the data even 60 days after they have taken place, especially if one restricts attention to markets that had not resolved by that time.

\begin{table}[h]\caption{60 day effects}\label{tab:60_day}
\vspace{-0.2em}
\begin{threeparttable}
\begin{tabular}{lccc}
\hline
            & (1)          & (2)          & (3)          \\ \hline
Variable    & 60 day effects & 60 day effects & 60 day effects \\
Yes         & 0.0559***    & 0.0855       & 0.0586***    \\
            & {[}0.0135{]} & {[}0.183{]}  & {[}0.0114{]} \\
Control     & 0.0122       & -0.155       & 0.0258**     \\
            & {[}0.0135{]} & {[}0.138{]}  & {[}0.0103{]} \\
$p_{-1, i}$ & 1.020***     & 0.860***     & 1.035***     \\
            & {[}0.0258{]} & {[}0.277{]}  & {[}0.0209{]} \\
Constant    & -0.0474***   & 0.0835       & -0.0596***   \\
            & {[}0.0141{]} & {[}0.135{]}  & {[}0.0110{]} \\ \hline
$n$         & 817          & 46           & 771          \\
$R^2$       & 0.617        & 0.156        & 0.723        \\ \hline
\end{tabular}
\begin{tablenotes}
\footnotesize
\item \hspace{-0.2em}\textit{Notes}. This table shows the results of estimating regression \eqref{eq_reg_main} after 60 days. Column (1) reports results for the full sample; Column (2) for markets that had already resolved; and Column (3) for markets that had not already resolved. Robust standard errors in parentheses (*** $p<0.01$, ** $p<0.05$, * $p<0.1$).
\end{tablenotes}
\end{threeparttable}
\end{table}

\textbf{Robustness.} The previous sections demonstrate that, on average, market prices somewhat revert following manipulative trades. However, even 60 days after the trades have taken place, their impacts can still be clearly seen in the data. Since this result contrasts with some previous findings on prediction market manipulability (e.g., \citealp{rhode2006manipulating}), it is important to assess its robustness. To do this, we focus on the comparison of the `yes' and `no' groups within 30 days of our bets (in line with our pre-registered analysis plan).

First, we re-estimate our main specification while controlling for \textit{all} baseline variables that are available. The results are shown in Columns (2) and (4) of Table \ref{tab:controls}; as a comparison, Columns (1) and (3) show the results from the baseline specification (which just controls for $p_{-1, i}$). As can be seen, adding the baseline controls makes very little difference to the estimates: for instance, the estimated level of reversion changes from 25\% to 24\% after 1 week. In addition, the standard errors of $\hat{\beta_1}$ remain very stable once the controls are added, suggesting that controlling for variables above and beyond the previous price $p_{-1, i}$ adds little statistical power.

\begin{table}[H]
\begin{threeparttable}
\caption{Results with all baseline controls}\label{tab:controls}
\begin{tabular}{lcccc}
\hline
            & (1)           & (2)           & (3)           & (4)           \\ \hline
Variable    & 1 week effects        & 1 week effects        & 30 day effects       & 30 day effects       \\
Yes         & 0.0748***     & 0.0759***     & 0.0686***     & 0.0697***     \\
            & [0.00446] & [0.00431] & [0.00908] & [0.00912] \\
Control     & 0.0313***     & 0.0305***     & 0.0239***     & 0.0235***     \\
            & [0.00484] & [0.00490] & [0.00853] & [0.00854] \\
$p_{-1, i}$ & 1.000***      & 1.003***      & 1.028***      & 1.030***      \\
            & [0.00781] & [0.00781] & [0.0161]  & [0.0155]  \\
Constant    & -0.0371***    & -0.0297***    & -0.0525***    & -0.0644***    \\
            & [0.00547] & [0.0112]  & [0.00910] & [0.0198]  \\ \hline
$n$         & 817           & 817           & 817           & 817           \\
$R^2$       & 0.933         & 0.937         & 0.798         & 0.803         \\\hline
\end{tabular}
\begin{tablenotes}
\footnotesize
\item \hspace{-0.2em}\textit{Notes}. This table shows the estimated $\hat{\beta}_1$ coefficients obtained for $t = 167$ (Columns 1 and 2) and $t = 719$ (Columns 3 and 4). Columns (1) and (3) present the results under the main specification; whereas Columns (2) and (4) control for all baseline variables. Robust standard errors in parentheses (*** $p<0.01$, ** $p<0.05$, * $p<0.1$).
\end{tablenotes}
\end{threeparttable}
\end{table}

Next, we examine whether our results might be biased by spillovers; i.e. by the possibility that betting on one market influences the price of another market in our sample due to either arbitrage or informational effects. As our analysis in Appendix \ref{sec_robustness} makes clear, it is extremely unlikely that spillovers are responsible for our results. First, since we deliberately avoided selecting `highly correlated' markets, it is unclear \textit{a priori} whether our interventions generated any spillovers. Second, even if they did generate spillovers, these should cancel out due to the randomness of the treatment assignment (see Appendix \ref{sec_robustness} for elaboration). Third, and perhaps most importantly, if spillovers are driving the results, then one would expect that dropping the more closely related markets would change the estimates. However, dropping such markets turns out not to have any appreciable effect on our estimates.

\subsection{Heterogeneity}

While the previous section reports the results for the full sample, the model we develop in Section \ref{sec_theory} suggests that the degree of reversion that is observed should systematically depend on market type. First, the model predicts that markets with less scope for learning should be harder to manipulate. To operationalise this notion, we consider whether each market question was `duplicated' on Metaculus since this provides an external probability estimate and thus should lessen the need for traders to base their personal probability estimates on Manifold's market price.\footnote{Whether a market is duplicated on Metaculus is also a reasonable proxy for whether the market can also be found on other prediction market platforms, which in turn should also blunt price learning effects.} Second, we consider each market's level of `activity' as measured by its total trading volume, trading volume in the last 24 hours, and total number of user comments. As discussed earlier, more active markets should revert more quickly since they effectively run on a faster clock speed. Third, we consider each market's number of traders; as discussed in Section \ref{sec_theory}, markets with more traders should also be harder to manipulate.

As outlined in our pre-registration \citep{pre-reg}, we approach this question through median splits. That is, for every continuous moderating variable $v_i$, we construct a variable $x_i$ that equals $1$ if and only if a market's  $v_i$ value is at least the median of $v_i$.\footnote{In the case where the majority of entries are zero, we define the median as the smallest non-zero number and split the sample at this point.} When $v_i$ is binary, we simply set $x_i = v_i$. We then estimate the regression
\begin{align}
p_{167, i} &= \beta_0 + \beta_1 \mathbbm{1}_{Y,i} + \beta_2 \mathbbm{1}_{C,i} + \beta_3 p_{-1, i} + \beta_4 \mathbbm{1}_{Y,i} x_i + \beta_5 \mathbbm{1}_{C,i} x_i + \beta_6 p_{-1, i} x_i + u_i
\end{align}
where $\mathbbm{1}_{Y,i}$ and $\mathbbm{1}_{C,i}$ are defined as before. Notice that the coefficient $\beta_4$ is precisely the difference in treatment effects between markets with $x_i = 0$ (below median) and markets with $x_i = 1$ (above median). By obtaining the standard error of $\hat{\beta_4}$, one can then assess if any differences that are observed are statistically significant.

Table \ref{tab:heterogeneity} reports the 1 week treatment effects conditional on $x_i = 0$ and $x_i = 1$ (Columns 1 and 2); it also displays the differences between treatment effects  $\hat{\beta}_4$ along with the $p$-values corresponding to the test that these differences are zero. Three results are apparent. First, markets that are also present on Metaculus appear to be harder to manipulate: after 1 week, they exhibit 47\% reversion in contrast to the 23\% reversion that is observed for the non-Metaculus markets. While this result is in line with theoretical expectations, it should be noted that, statistically, the difference between the two groups is only significant at the 10\% level. Second, markets with higher levels of `activity' (e.g., higher trading volume) are also harder to manipulate. For example, markets with above median total volume exhibit 31\% reversion over the course of a week; whereas markets with below median total volume exhibit just 19\% reversion. Again, these results are in line with theoretical expectations but not always significant: one observes highly significant results when examining 24 hour volume, but somewhat less significant results when examining total volume and the total number of comments. Third, as one would expect, markets with a larger number of traders are also harder to manipulate: reversion for markets with an above-median number of traders is 33\%, whereas reversion for markets with a below-median number of traders is 17\%.\footnote{While all three results are consistent with our model, they come with the usual caveat concerning causality: since market characteristics are not randomly assigned, one cannot be sure that any observed heterogeneities reflect the causal impact of these characteristics.}

\renewcommand{\arraystretch}{1.22}
\begin{table}[H]
\begin{threeparttable}
\caption{Heterogeneity in treatment effects}\label{tab:heterogeneity}
\vspace{-0.5em}
\begin{tabular}{l|cccc}
\multicolumn{1}{l|}{Variable} & \multicolumn{1}{l}{Above median} & \multicolumn{1}{l}{Below median} & \multicolumn{1}{l}{Difference} & \multicolumn{1}{l}{$p$-value} \\ \hline
Metaculus             & 0.053***                            & 0.077*** & 0.024* &  0.096                           \\
24 hour volume        & 0.049***                            & 0.081*** & 0.032** &  0.041                          \\
Total volume          & 0.069***                            & 0.081*** & 0.012 & 0.191                           \\
Comments              & 0.069***                            & 0.084*** & 0.015*  & 0.071 \\
Total traders         & 0.067***                            & 0.083*** & 0.017* &  0.056    \\
\hline
\end{tabular}
\begin{tablenotes}
\footnotesize
\item \hspace{-0.2em}\textit{Notes}. This table shows how results depend on market type. The first two columns report 1 week effects for sub-samples of the data that are above and below the median values of the moderating variable. The next two columns report the difference between these effects along with the $p$-value corresponding to the test that this difference is zero. The asterisks correspond to the hypothesis that the relevant coefficient equals zero (*** $p<0.01$, ** $p<0.05$, * $p<0.1$).
\end{tablenotes}
\end{threeparttable}
\end{table}
\renewcommand{\arraystretch}{1.22}

While Table \ref{tab:heterogeneity} reports how the estimates change following median splits, one can also split the data in a more granular way to obtain a more detailed understanding of the heterogeneities at play. Tables \ref{tab:trader_splits} and \ref{tab:volume_splits} conduct this exercise for the two moderating variables with well-defined percentile values: the number of traders and the total volume of trade. Specifically, we split the sample according to whether each of these moderating variables exceeds their 0th percentile (the full sample), their 25th percentile, their 50th percentile (the median split), and their 75th percentile values.\footnote{Note that this exercise cannot be conducted for all variables since they need not have well-defined percentiles: for example, the Metaculus variable is binary.} In line with the predictions of our model (see Table \ref{tab_full_sim}), this produces an apparently monotone pattern: as the number of traders or total volume in the sample rises, the 1 week treatment effect falls. For example, while reversion is 25\% after a week for the full sample, reversion rises to 37\% for the sample of markets that are in the top 25\% in terms of number of traders; note that such markets have 56 traders on average and 349 final trades. However, even once one restricts attention to subgroups with a very high total volume or number of traders (which involves a substantial reduction of the sample), one can still observe clearly significant treatment effects ($p < 0.01$).

In summary, the evidence presented here suggests that the manipulability of markets varies across the sample in just the way that our model predicts. In all sub-samples that we consider, we observe the same qualitative pattern: some reversion over time, but also some scope for manipulation. However, the degree of reversion varies in the expected ways. In particular, markets with an external source of probability estimates (namely, Metaculus), markets with higher `activity' (measured by market volume and the number of comments) and markets with a larger number of traders appear to be harder to persistently manipulate.

\section{Sweepcash markets}\label{sec_sweepcash}

Before concluding, we briefly discuss the results of a follow-up experiment that was made possible by an unexpected change to the Manifold platform. On 18 September 2024, Manifold announced that it would introduce `Sweepcash' markets to run in parallel with the `Mana' markets that it was already hosting on its website. Such markets would run on a Manifold created currency (`Sweepcash') that was convertible at an approximately 1-1 rate with US dollars.\footnote{At the time of our experiment, it was possible to buy 1 unit Sweepcash for 1 US dollar (although slightly different prices were also available depending on the purchase quantity). However, Manifold levied a 5\% fee on all Sweepcash withdrawals.} Interestingly, Sweepcash markets were to work almost exactly like Mana markets in essentially all other respects --- for example, they were to run on the same interface and were underpinned by the same pricing algorithm that we study theoretically in Section \ref{sec_theory}. Thus, the introduction of Sweepcash markets provided an opportunity to learn whether the manipulability of prediction markets might change once the `charitable giving' incentive was replaced by more standard financial incentives.

\textbf{Intervention.} Unfortunately, the number of Sweepcash markets offered by Manifold was rather more limited than the number of Mana markets hosted on the platform. For this reason, we decided to drop the control group from our follow-up; instead, all markets were either allocated to a `yes' or `no' group. Since our working paper was already circulating at this point, we were concerned that a string of bets that all moved prices by exactly 5 percentage points might create suspicion. For this reason, we decided to change our shocks to 4 percentage points. Thus, markets in the `yes' group had their prices shocked upwards by 4 percentage points and markets in the `no' group had their prices shocked downwards by the same amount.

\textbf{Exclusion criteria.} Similarly to the main experiment, we excluded some markets from our sample. First, we excluded markets that resolved after the end of 2026 or within 14 days of our trade. Second, we again excluded markets that were closely related to other markets in our sample. Third, we excluded any market that would cost more than \$500 to move in either direction by 4 percentage points. In the event, this constraint was not binding; our largest trade (for a particular market) was \$175.\footnote{On average, we spent \$15.72 per trade; and thus invested \$1,729.20 in the Sweepcash markets in total.}

\textbf{Power.} To choose a sample size, we calibrated our assumed statistical model using data from our previous experiment on Manifold; see our pre-registration \citep{pre-reg2} for the details. As it turned out, this calibration was reasonably accurate: for example, we assumed an error variance of $0.074$, which closely matches up with the (ex-post) estimate from our residuals of $0.076$. Based on the simulations reported in \cite{pre-reg2}, we estimated that we would need 110 markets in order to have an 80\% probability of detecting effects within a one week period. While one would ideally have more than 80\% power, a larger sample was precluded by the limited number of Sweepcash markets on Manifold.\footnote{When we started our experiment, there were only 69 Sweepcash markets on Manifold that met our inclusion criteria. We suspected (correctly, as it turned out) that Manifold would be slow to create additional Sweepcash markets.}

\textbf{Data.} As before, we recorded extensive data on the markets, including the number of comments, total volume and number of traders at the point at which we made our trades (see \citealp{pre-reg2} for the list of price variables). Most importantly, we also recorded one week's worth of price data for each market, this time on a daily basis. Given our relatively small sample size---and that power to detect effects generally falls as the time horizon grows---we decided not to record price data beyond this one week window.

\textbf{Timelines.} We pre-registered our experiment on 3 December and made our trades from December 2024 to February 2025. We had planned to finish our data recording on 13 February 2025. Unfortunately, on that same day, Manifold announced that it was to discontinue Sweepcash markets and suspended trading on most of the Sweepcash markets in operation. For this reason, we were not able to collect seven days worth of price data for all of the markets in our sample. However, we do have six days worth of price data for the full sample; and also have a subset of markets with seven days worth of price data.

\textbf{Results.} Table \ref{tab:desc_stat2} shows some descriptive statistics on the markets in our sample. As before, the first four variables were measured just before our trades; whereas the final two variables show the total number of traders and trades by the end of the markets' life-cyles (for those markets that had `naturally' resolved before the suspension of Sweepcash markets). As can be seen, the Sweepcash markets were somewhat more active on average than the Mana markets in our main experiment. For example, the average market had 158.7 trades by the end of its lifespan, compared to a corresponding figure of 136.0 for the Mana markets.

\begin{table}[H]
\begin{threeparttable}
\caption{Descriptive statistics}\label{tab:desc_stat2}
\vspace{-0.5em}
\begin{tabular}{lcccc}
\hline
Variable       & \hspace{1.5em}Mean\hspace{1.5em}     & Std. dev. & Minimum & Maximum   \\ \hline
24 hour volume &  21.2  & 46.3 & 0 & 277 \\
Total volume   & 176 & 322 & 2 & 2,523 \\
Comments       & 6.24    & 11.12              & 0   & 67    \\
Traders        & 13.8   & 15.9             & 1  & 95   \\ \hline
Final traders  & 49.4 & 42.7        & 18  &  243   \\
Final trades   & 158.7  &  123.5            & 29  &  494  \\ \hline
\end{tabular}
\begin{tablenotes}
\footnotesize
\item \hspace{-0.2em}\textit{Notes}. This table shows descriptive statistics for the Sweepcash markets in our sample. The first four variables show statistics for the full sample ($n = 110$). The last two variables show statistics for the 35 markets that had resolved before the suspension of Sweepcash markets.
\end{tablenotes}
\end{threeparttable}
\end{table}

Similarly to before, we assess the effect of our manipulation attempts by estimating models of the form
\begin{equation}\label{eq_reg_main2}
p_{t, i} = \beta_0 + \beta_1 \mathbbm{1}_{Y, i} + \beta_3 p_{-1, i} + u_i
\end{equation}
where $p_{t, i}$ is the price in market $i$ on day $t$, $\mathbbm{1}_{Y, i}$ tracks whether a market was placed in the `yes' group, and $p_{-1, i}$ is the price in the market $i$ just before our trade. Note that $\beta_1$ again represents the difference in average prices between the `yes' and `no' groups, adjusting for any baseline differences in the $p_{-1, i}$ variable.

As Figure \ref{fig:yes_no_followup} reveals, two results emerge from this exercise. First, we again see clear evidence of manipulability: on every day for which data is recorded, prices in the `yes' group are higher than prices in the `no' group. Second, we again see clear evidence of reversion: as time goes on, the gap in average prices falls substantially from 8 percentage points. It is perhaps worth noting that the estimated speed of reversion appears to slow down over time, in agreement with our theoretical model. However, the size of the confidence intervals makes it difficult to say anything definite about curvature.

\begin{figure}[t!]
    \centering
    \hspace{0cm}\includegraphics[width=0.95\textwidth]{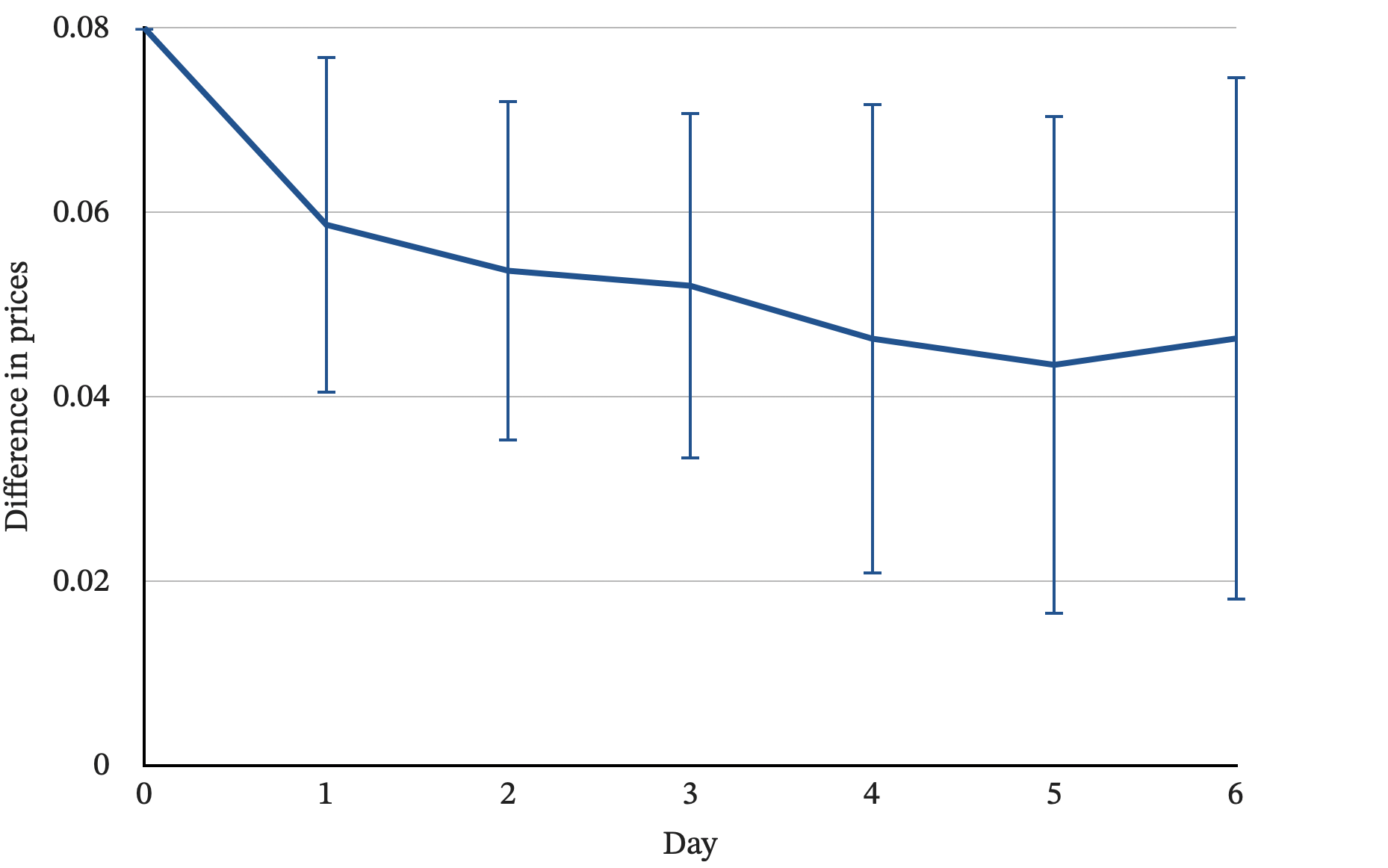}
    \caption{Results for the Sweepcash markets ($\hat{\beta}_1$ estimates)}
    \label{fig:yes_no_followup}
    \hspace{0cm}\begin{minipage}{0.85\textwidth}
    \vspace{0.15cm}
    \footnotesize \textit{Notes}. This figure plots the $\hat{\beta}_1$ coefficients obtained from estimating Equation \eqref{eq_reg_main2} for all $t \in \{0, 1, ..., 6\}$. The 95\% confidence intervals are computed using robust standard errors.
\end{minipage}
\end{figure}

To check the robustness of our results, we can also estimate analogous $\hat{\beta}_1$ coefficients while controlling for all baseline variables. As one can see from Table~\ref{tab:reg_sweepcash}, this does not have any real impact on the estimates; as before, adding further controls (above and beyond $p_{-1, i}$) does not add additional power. We can also estimate effects for an extra day using the subset of markets for which such data is available. As can be seen from Table~\ref{tab:reg_sweepcash}, this specification again shows clear evidence of reversion along with clear evidence of manipulability.

Interestingly, the degree of reversion observed in the Sweepcash markets appears to be faster than the degree observed in the full sample of the main experiment. However, the size of our standard errors (given the relatively small sample in the follow-up) makes it hard to establish this with confidence. One explanation for this difference, assuming that it is real, would be that the Sweepcash markets tend to be more active than the average market from the main sample. Indeed, if one compares markets with positive 24 hour volume from both samples, one obtains reasonably similar reversion estimates (51\% reversion for the Mana markets and 43\% reversion for the Sweepcash markets).

One interesting question is how our $\hat{\beta}_1$ estimates might change if our design were extended to some of the very high-value prediction markets that are currently in operation. Our model would predict that a shock of a particular size (e.g. 4 percentage points) would be more persistent in such markets since it is more expensive and thus harder to `undo' by risk-averse or budget-constrained traders. In addition, one might expect that a very expensive trade in a high-value market conveys a stronger informational signal, thereby increasing the extent to which it updates trader beliefs and thus persists. While it would be interesting to test these conjectures in future work, doing so would be both expensive and logistically challenging (not least because there are few very high-value prediction markets in operation).

Taken together, the results from this section suggest that the results from the main experiment extend robustly to Sweepcash markets. As before, we obtain clear evidence that the markets are manipulable---as indicated by the gap in average prices between the `yes' and `no' groups---along with clear evidence of reversion. Given these results, it does not appear that replacing Mana with Sweepcash (while holding most other features of the platform constant) has a substantial impact on prediction market manipulability.

\section{Conclusion}\label{sec_conclusion}

In their review of the existing evidence, \cite{wolfers2004prediction} state that manipulation attempts do not have `much of a discernible effect on prices, except during a short transition phase'. Our large-scale field experiments challenge this conclusion: for example, in our main experiment, we can detect the effects of our manipulations even 60 days after they were made. However, as predicted by our model, we also find substantial reversion and important heterogeneities in the expected directions.

In our view, our results somewhat understate the degree to which prediction markets are susceptible to manipulation. To keep our design simple and easily interpretable, we made just one trade within each market and shocked the markets in our sample by the exact same amount (which, in theory, might seem suspicious to other traders). In practice, however, a manipulator might be able to move prices even more effectively by making repeated bets on a single market and varying the size of their shocks.


Our experiment also opens the door to a lot of future work. First, although we study `manipulation through trade', it may also be interesting to study `manipulation through buzz': by leaving appropriate comments, it may also be possible to systematically shift the market price.\footnote{We thank Koleman Strumpf for this suggestion.} Second, it would be interesting to study optimal manipulation. While very small trades cannot have a large impact on the market price, very large trades also appear unlikely to have persistent effects since they are unlikely to appear `credible' --- for this reason, one might expect a `U-shaped' relationship between the long-run effect on prices and size of the initial shock. Although our design does not address this question since it holds the size of the initial shock fixed, it would be straightforward (if logistically demanding) to adapt our design so that it considers a range of initial shock sizes.

\clearpage

\setlength{\bibhang}{0pt}
\bibliographystyle{apalike}
\bibliography{bibliography.bib}

\newpage

\setcounter{table}{0}
\renewcommand{\thetable}{A\arabic{table}}

\setcounter{figure}{0}
\renewcommand{\thefigure}{A\arabic{figure}}

\AddToHook{env/appendices/begin}{%
  \titleformat{\section}
  {\normalfont \scshape}{\sectionname~\thesection.}{0.5em}{}
}

\begin{appendices}

\section{Proofs}\label{sec_proofs}

\begin{proof}[Proof of Lemma \ref{lemma1}]
We begin by considering the marginal cost function. As shown in the main text, the total cost function is given by
\begin{equation}
C(q) =  \frac{\sqrt{(n  - q + y)^2 + 4 n q} + q - n - y}{2}
\end{equation}
This implies that marginal costs are given by
\begin{equation}\label{mc}
MC(q) = \frac{1}{2} \left(\frac{n + q - y}{ \sqrt{(n-q+y)^2+4 n q}}+1\right)
\end{equation}
The three properties follow from this formula:

(i) To show that $MC(0) = n/(n + y)$, one simply substitutes  $q = 0$ into (\ref{mc}) and simplifies the resulting expression.

(ii) To show that $MC'(q) > 0$ for all $q \geq 0$, one computes the derivative
\begin{equation}
MC'(q) = \frac{2 n y}{\left((n-q+y)^2+4 n q\right)^{3/2}}
\end{equation}
which is positive under the restrictions that $n > 0$, $y > 0$ and $q \geq 0$.

(iii) To show that $\displaystyle{\lim_{q \rightarrow \infty} MC(q) = 1}$, one notes that
\begin{equation}
\begin{split}
\lim_{q \rightarrow \infty} MC(q) &= \frac{1}{2} \lim_{q \rightarrow \infty} \frac{n + q - y}{ \sqrt{(n-q+y)^2+4 n q}} + \frac{1}{2} \\
&= \frac{1}{2} \lim_{q \rightarrow \infty} \sqrt{\frac{(n + q - y)^2}{ (n-q+y)^2+4 n q}} + \frac{1}{2} \\
&= \frac{1}{2} \sqrt{\frac{1}{1}} + \frac{1}{2} \\
&= 1
\end{split}
\end{equation}
where the penultimate equality follows after expanding out the terms and dividing all terms by the leading power $q^2$. This establishes the final property of the marginal cost function.

We now show that these properties transfer to the average cost function:

(i) To show that $\displaystyle{\lim_{q \rightarrow 0^+}} AC(q) = n/(n + y)$, note that
\begin{equation}
\lim_{q \rightarrow 0^+} AC(q) = \lim_{q \rightarrow 0^+} \frac{C(q)}{q}
= \lim_{q \rightarrow 0^+} \frac{C(0 + q) - C(0)}{q}
= MC(0)
= \frac{n}{n + y}
\end{equation}
where the second equality uses the fact that $C(0) = 0$.

(ii) To show that $AC'(q) > 0$ for all $q > 0$, we explicitly compute the derivative
\begin{equation}
\frac{\partial AC}{\partial q} = \frac{1}{2q^2} \left( -\sqrt{(n - q + y)^2 + 4 n q} + \frac{q (n + q - y)}{\sqrt{(n - q + y)^2 + 4 n q}} + n + y \right)
\end{equation}
Multiplying by $2q^2 \sqrt{(n - q + y)^2 + 4 n q} > 0$, this has the same sign as
\begin{align}
-(n - q + y)^2 - 4nq + q (n + q - y) + (n + y)\sqrt{(n - q + y)^2 + 4 n q}
\end{align}
We want to show that this expression is positive, or
\begin{align}(n + y)\sqrt{(n - q + y)^2 + 4 n q} > -[q (n + q - y) -(n - q + y)^2 - 4 n q]
\end{align}
After squaring both sides and simplifying, the condition becomes $n q^2 y > 0$, which holds since $n, q, y > 0$. Thus, $AC'(q) > 0$ for all $q > 0$ as claimed.

(iii) To show that $\displaystyle{\lim_{q \rightarrow \infty}} AC(q)  = 1$, notice that, as $q \to \infty$, both $C(q) \to \infty$ and $q \to \infty$. Since both are differentiable, we can evaluate the limit of $AC(q)$ using L'Hôpital's rule:
\begin{equation}
\lim_{q \rightarrow \infty} AC(q) = \lim_{q \rightarrow \infty} \frac{C(q)}{q} = \lim_{q \rightarrow \infty} \frac{C'(q)}{1} = \lim_{q \rightarrow \infty} MC(q) = 1
\end{equation}
This establishes the final property of the average cost function.\end{proof}

\begin{proof}[Proof of Lemma \ref{lemma2}]
We consider the cases separately:

\textbf{Case 1}: $\pi_i > p$. First, suppose that the agent purchases a positive quantity $q_n > 0$ of no shares. Then the expected value of their holdings $W$ would be:
\begin{equation}
\begin{split}
\mathbb{E}[W] &= \pi_i (w - q_n AC(q_n)) + (1 - \pi_i)(w - q_n AC(q_n) + q_n) \\
&< \pi_i \left(w - q_n \frac{y}{n + y} \right) + (1 - \pi_i)\left(w - q_n \frac{y}{n + y} + q_n \right) \\
&= w + q_n \left(\frac{n}{n + y} - \pi_i \right) \\
&< w
\end{split}
\end{equation}
Here, the first inequality holds since $AC(q_n) > y/(n + y)$ for any $q_n > 0$ (by the no shares analogue of Lemma \ref{lemma1}); the next equality holds via algebraic manipulation; and the final inequality holds since $\pi_i > p \equiv n/(n + y)$. Thus, holding no shares reduces the expected value of the agent's holdings. Since it also exposes the agent to risk, and the agent is risk-averse, it therefore cannot be optimal.

Next, suppose that the agent purchases a quantity $q_y \geq 0$ of yes shares (possibly, $q_y = 0$). Their expected utility is then
\begin{align}
\mathbb{E}[u(W)] &= \pi_i u(w + q_y - C(q_y)) + (1 - \pi_i)u(w - C(q_y))
\end{align}
Using the chain rule, this has derivative
\begin{align}\label{derivative}
\frac{\partial \mathbb{E}[u(W)]}{\partial q_y} &= \pi_i u'(w + q_y - C(q_y))(1 - C'(q_y))
- (1 - \pi_i)u'(w - C(q_y))C'(q_y)
\end{align}
Recalling Lemma \ref{lemma1}, we see that
\begin{align}
\left. \frac{\partial \mathbb{E}[u(W)]}{\partial q_y} \right|_{q_y=0} &= \pi_i u'(w)\left(1 - \frac{n}{n + y}\right) - (1 - \pi_i)u'(w) \left(\frac{n}{n + y} \right) \\
&= u'(w) \left( \pi_i - \frac{n}{n + y} \right)
\end{align}
Since $u'(w) > 0$ and $\pi_i > n/(n + y)$, this expression is positive and so $q_y = 0$ cannot be optimal: there must exist some $\epsilon > 0$ such that $q_y = \epsilon$ generates higher expected utility.

The previous arguments show that neither $q_n > 0$ nor $q_y = 0$ are optimal choices. However, since the trader's choice set is compact (recall the budget constraint) and their objective function is continuous (recall that $u$ is continuous and that the cost function is polynomial), an optimal choice must exist. From this, it follows that some $q_y > 0$ must be optimal.

\textbf{Case 2 $\pi_i = p$}. Using the same reasoning as before, we see that, if the agent purchases no shares, then the expected value of their holdings will be
\begin{equation}
\mathbb{E}[W] < w + q_n \left(\frac{n}{n + y} - \pi_i \right) = w
\end{equation}
Thus, buying no shares reduces the expected value of their holdings. Since it exposes them to risk, and they are risk-averse, this cannot be optimal. Analogously, if they purchase yes shares, the expected value of their holdings becomes
\begin{equation}
\mathbb{E}[W] < w + q_y \left(\pi_i - \frac{n}{n + y}\right) = w
\end{equation}
Thus, purchasing yes shares also reduces the expected value of their holdings and thus cannot be optimal. However, an optimal choice does exist (see above). From this, it follows that the optimal choice is $q_n = q_y = 0$.

\textbf{Case 3 $\pi_i < p$}. This case can be handled in a similar way to the case of $\pi_i > p$. That is, one first shows that purchasing yes shares cannot be optimal since it reduces the expected value of the agent's holdings. By examining the derivative of expected utility when $q_n  = 0$, one then shows that $q_n = 0$ cannot be optimal either.\end{proof}

\begin{proof}[Proof of Lemma \ref{lemma3}] First, we use Lemma \ref{lemma2} to pin down the types of share that each type of trader must buy. Notice that:
\begin{itemize}
    \item If $\pi_i \geq p + \Delta$, then $\pi_i \geq p$ (since $\Delta > 0$). Thus, Lemma \ref{lemma2} implies that such types must hold a weakly positive quantity of yes shares both before and after the price change.
    \item If $\pi_i \in (p, p + \Delta)$, Lemma \ref{lemma2} implies that such types would buy yes shares before the price change but buy no shares after the price change.
    \item If $\pi_i \leq p$, then $\pi_i \leq p + \Delta$. Thus, Lemma \ref{lemma2} implies that such types must hold a weakly positive quantity of no shares both before and after the price change.
\end{itemize}

Next, we need to show that types with $\pi_i \geq p + \Delta$ will reduce their holdings of yes shares $q$. If $\pi_i = p + \Delta$, the statement is trivial: by Lemma \ref{lemma2}, such types will choose $q > 0$ before the price change but $q = 0$ after the price change. Next, suppose that $\pi_i = 1$. For such types, expected utility reduces to $u(w + q - C(q))$, which can be transformed to $w + q - C(q)$. Differentiating with respect to $q$, this becomes $1 - C'(q) > 0$ since $C'(q)< 1$ for any finite $q$ (by Lemma \ref{lemma1}). Thus, such types will purchase the maximum feasible number of yes shares, i.e. $\bar{q}$ such that $C(\bar{q}) = w$. Since $C$ is strictly increasing in $q$ and so invertible, $\bar{q} = C^{-1}(w)$. Moreover, one can verify (see below) that increasing the price --- which is equivalent to increasing $n$ and decreasing $y$ while holding the product $ny$ constant --- increases $C(q)$ for any $q \geq 0$, thus decreasing $C^{-1}(q)$. Hence, increasing the price decreases $\bar{q} = C^{-1}(w)$; that is, it induces the trader to reduce their holdings of yes shares (as claimed).

Having shown that the statement holds when $\pi_i = p + \Delta$ or $\pi_i = 1$, we now consider traders whose beliefs satisfy $\pi_i \in~(p + \Delta, 1)$. As shown by Lemma \ref{lemma2}, $q = 0$ cannot be optimal for such types. Meanwhile, since $\lim_{w_s \to 0^+} u'(w_s) = \infty$, it cannot be optimal to choose $q = \bar{q}$, thereby obtaining zero wealth in one of the states. Hence, $q \in (0, \bar{q})$. Since expected utility is differentiable, this means that any optimal $q$ must satisfy the first order condition
\begin{equation}\label{foc}
\pi_i u'(w + q - C(q))(1 - C'(q))
- (1 - \pi_i)u'(w - C(q))C'(q) = 0
\end{equation}
Moreover,
\begin{equation}
\begin{split}
\frac{\partial^2 \mathbb{E}[u(W)]}{\partial q^2} &= \pi_i \left[u''(w + q - C(q))(1 - C'(q))^2 -  C''(q)u'(w + q - C(q))\right] \\
&- (1 - \pi_i) \left[-C'(q)^2u''(w - C(q)) + u'(w - C(q))C''(q)\right]
\end{split}
\end{equation}
which is negative under the restrictions that $u' > 0$, $u'' < 0$, $C' > 0$ and $C'' > 0$. Thus, the first order condition (\ref{foc}) uniquely characterises the global maximum.

Rewriting (\ref{foc}), we obtain
\begin{equation}\label{eq_foc'}
\frac{\pi_i u'(w + q - C(q))}{(1 - \pi_i)u'(w - C(q))} = \frac{C'(q)}{1 - C'(q)}
\end{equation}
Notice that, as the price rises --- which is equivalent to increasing $n$ and decreasing $y$ while holding the product $k = ny$ constant --- both $C(q)$ and $C'(q)$ may change. To make this dependence explicit, we can write the first order condition as $f(q, y, n) = g(q, y, n)$, where $f$ and $g$ are respectively defined as the left hand side and right hand side of~\eqref{eq_foc'}. Since $y$ is a function of $n$ (specifically, $y = k/n$), this can be simplified to $f(q, n) = g(q, n)$. To see how the optimal $q$ varies with the price (i.e. $n$), one can apply the implicit function theorem
\begin{equation}\label{ift}
 \frac{d q}{d n} = \frac{\frac{\partial g(q, n)}{\partial n}  - \frac{\partial f(q, n)}{\partial n} }{\frac{\partial f(q, n)}{\partial q} - \frac{\partial g(q, n)}{\partial q}}
\end{equation}
To compute the sign of (\ref{ift}), we now compute the signs of every term:

$\diamond$ Observe that
\begin{equation}\label{eq_dg/dn}
\frac{\partial g(q, n)}{\partial n} = \frac{\partial g(q, n)}{\partial C'(q)} \frac{\partial C'(q)}{\partial n} = \frac{1}{(1 - C'(q))^2} \frac{\partial C'(q)}{\partial n}
\end{equation}
Moreover, after substituting $y = k/n$ into the marginal cost function, one finds that
\begin{equation}\label{eq_dc/dn}
\frac{\partial C'(q)}{\partial n} = \frac{2 k n  \left(k+n^2\right)}{\left(k^2+2 k n (n-q)+n^2 (n+q)^2\right)^{3/2}} > 0
\end{equation}
From \eqref{eq_dg/dn} and \eqref{eq_dc/dn}, it follows that $\frac{\partial g(q, n)}{\partial n} > 0$.

$\diamond$ Next, observe that
\begin{equation}\label{df_dn}
\frac{\partial f(q, n)}{\partial n} = \frac{\partial f(q, n)}{\partial C(q)}  \frac{\partial C(q)}{\partial n}
\end{equation}
Since $C(q) = \int_0^q C'(q) dq$ and $\frac{\partial C'(q)}{\partial n} > 0$, $\frac{\partial C(q)}{\partial n} > 0$. Moreover,
\begin{equation}
\begin{split}
\frac{\partial f(q, n)}{\partial C(q)} &= \frac{\pi_i}{(1 - \pi_i)}\frac{\partial }{\partial C(q)}  \frac{ u'(w + q - C(q))}{u'(w - C(q))} \\
&= \frac{\pi_i}{(1 - \pi_i)} \frac{u''(w - C(q))u'(w + q - C(q)) - u''(w + q - C(q))u'(w - C(q))}{u'(w - C(q))^2}
\end{split}
\end{equation}
which has the same sign as
\begin{equation}
u''(w - C(q))u'(w + q - C(q)) - u''(w + q - C(q))u'(w - C(q))
\end{equation}
We claim that this expression is negative, or
\begin{equation}
-\frac{u''(w - C(q))}{u'(w - C(q))} > -\frac{u''(w + q - C(q))}{u'(w + q - C(q))}
\end{equation}
which holds under the assumption of decreasing absolute risk aversion. Hence, $\frac{\partial f(q, n)}{\partial C(q)} < 0$. From this and (\ref{df_dn}), it follows that $\frac{\partial f(q, n)}{\partial n} < 0$.

$\diamond$ Next, observe that
\begin{equation}
    \frac{\partial f(q, n)}{\partial q} = \frac{\pi_i}{1-\pi_i}\frac{\partial}{\partial q} \frac{u'(w + q - C(q))}{u'(w - C(q))}
\end{equation}
Since $\pi_i/(1 - \pi_i) > 0$, this has the same sign as
\begin{equation}
\frac{u''(w + q - C(q))u'(w - C(q))(1-C'(q)) + u''(w - C(q))u'(w + q - C(q))C'(q)}{u'(w - C(q))^2}
\end{equation}
which is negative given that $u' > 0$, $u'' < 0$ and $C'(q) \in (0, 1)$. Hence, $\frac{\partial f(q, n)}{\partial q} < 0$.

$\diamond$ Finally, observe that
\begin{equation}
\frac{\partial g(q, n)}{\partial q} = \frac{\partial g(q, n)}{\partial C(q)} C'(q) = \frac{C'(q)}{(1 - C'(q))^2} > 0
\end{equation}

In summary, $\frac{\partial g(q, n)}{\partial n} > 0$, $\frac{\partial f(q, n)}{\partial n} < 0$, $\frac{\partial f(q, n)}{\partial q} < 0$ and $\frac{\partial g(q, n)}{\partial q} > 0$. It follows that
\begin{equation}
 \frac{d q}{d n} = \frac{\frac{\partial g(q, n)}{\partial n}  - \frac{\partial f(q, n)}{\partial n} }{\frac{\partial f(q, n)}{\partial q} - \frac{\partial g(q, n)}{\partial q}} < 0
\end{equation}
Thus, any trader with $\pi_i \geq p + \Delta$ will decrease their holdings of yes shares. By an exactly symmetric argument (which we omit for brevity), one can also show that any trader with $\pi_i \leq p$ will increase their holdings of no shares.\end{proof}

\newpage

\section{Tables and figures}\label{sec_tables}

\begin{figure}[h!]
    \centering
    \caption{Costs under the constant product rule ($n = y = 10$)\vspace{-1.5em}}
    \hspace{-2em}\includegraphics[width=14cm, keepaspectratio]{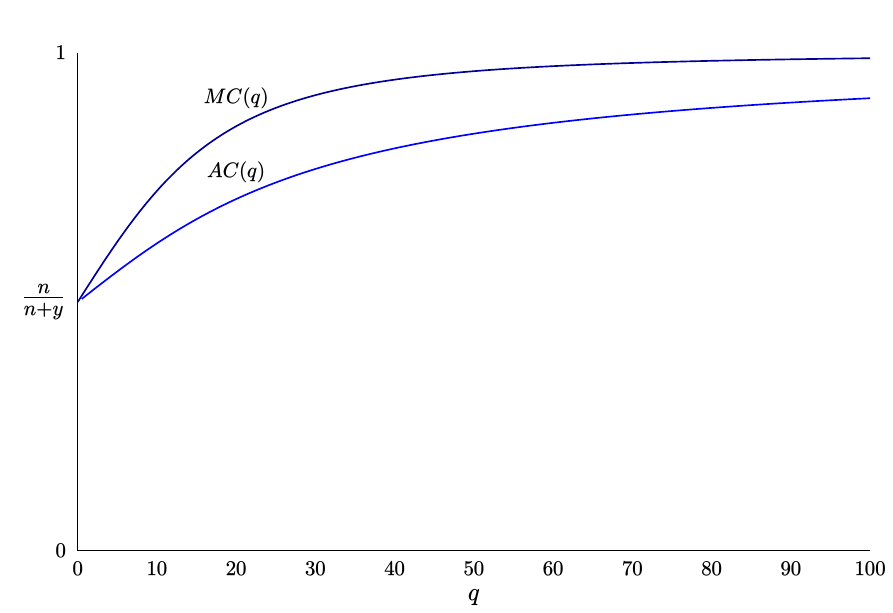}
    \label{fig:mc_ac}
    \begin{minipage}{12.8cm}
    \vspace{0.3cm}
    \footnotesize \textit{Notes}. This figure displays marginal and average costs under the constant product rule with initial reserves $(y, n) = (10, 10)$.
\end{minipage}
\end{figure}

\begin{table}[H]
\begin{threeparttable}
\caption{Reversion coefficients under different market conditions}\label{tab_full_sim}
\begin{tabular}{l|cccccc}
\hline
$\lambda$ & 0      & 0.2    & 0.4    & 0.6    & 0.8    & 1      \\ \hline
SR reversion         & \hspace{0.5em}14.5\%\hspace{0.5em} & \hspace{0.5em}13.3\%\hspace{0.5em} & \hspace{0.5em}11.2\%\hspace{0.5em} & \hspace{0.5em}8.3\%\hspace{0.5em}  & \hspace{0.5em}4.5\%\hspace{0.5em}  & \hspace{0.5em}0.0\%\hspace{0.5em}  \\
LR reversion           & 39.8\% & 37.3\% & 33.0\% & 26.2\% & 15.9\% & 0.0\%  \\ \hline
$\alpha$  & 0      & 0.2    & 0.4    & 0.6    & 0.8    & 1      \\ \hline
SR reversion           & 22.7\% & 22.4\% & 21.5\% & 19.9\% & 17.6\% & 14.5\% \\
LR reversion           & 52.3\% & 51.8\% & 50.6\% & 48.4\% & 44.9\% & 39.8\% \\ \hline
$m$       & 10     & 20     & 30     & 40     & 50     & 60     \\ \hline
SR reversion            & 14.5\% & 16.0\% & 16.7\% & 16.9\% & 18.5\% & 18.6\% \\
LR reversion            & 39.8\% & 57.1\% & 67.1\% & 74.4\% & 79.3\% & 82.0\% \\
\hline
\end{tabular}
\begin{tablenotes}
\footnotesize
\item \hspace{-0em}\textit{Notes}. This table shows the short-run (SR) and long-run (LR) reversion coefficients under different market conditions. SR reversion is defined using $(\bar{p}_3 - \bar{p}_0)/0.05$ and LR reversion using $(\bar{p}_{100} - \bar{p}_0)/0.05$, where $\bar{p}_t$ is the average price across the 10,000 simulations at time $t$ and $t = 0$ is the time of the manipulation. The coefficients $\lambda$, $\alpha$ and $m$ respectively denote the learning rate, the belief parameter and the number of traders.
\end{tablenotes}
\end{threeparttable}
\end{table}

\begin{figure}[H]
    \centering
    \caption{An example of a market on Manifold.}
    \vspace{-0.5em}
    \hspace{0cm}\includegraphics[width=0.9\textwidth]{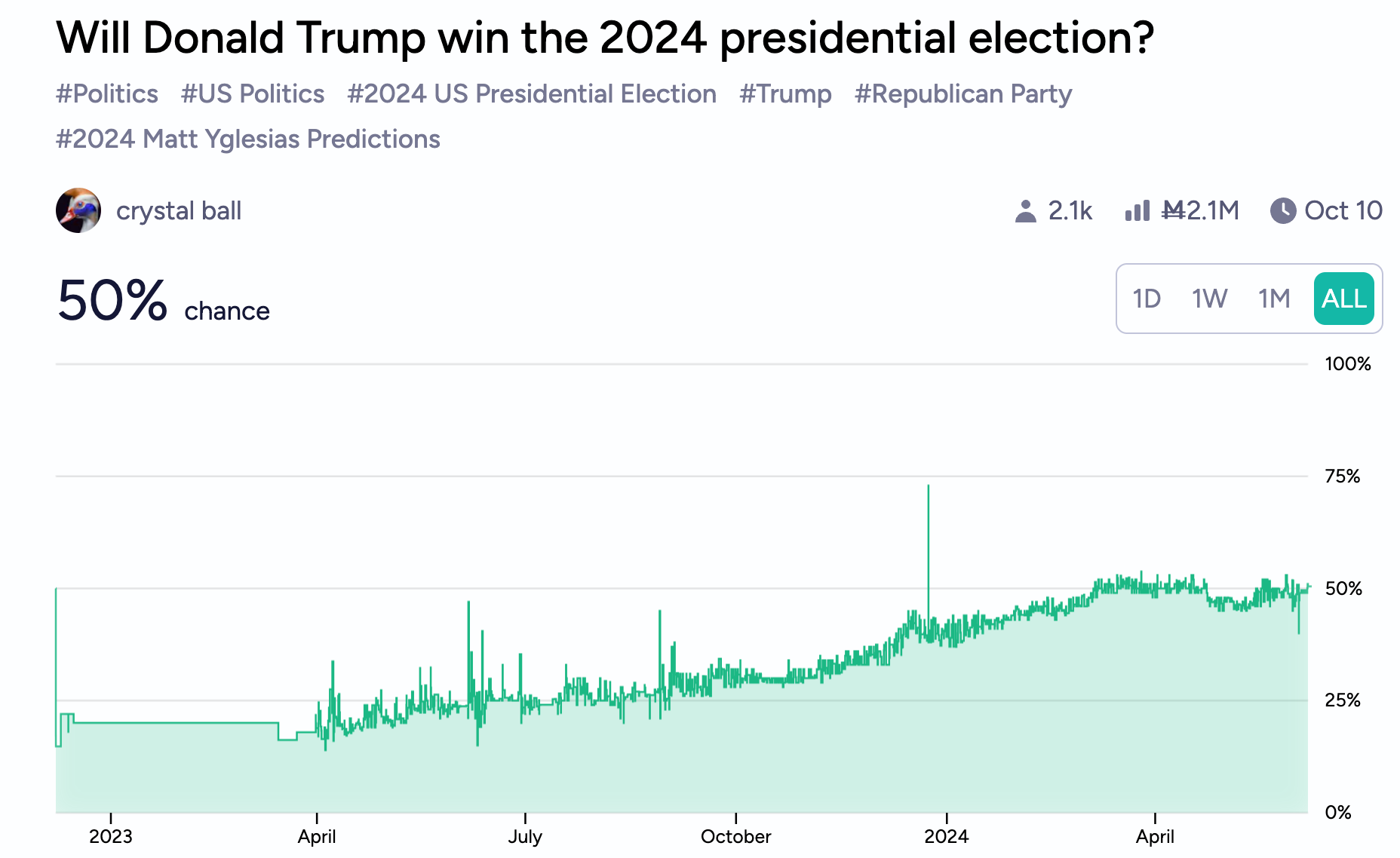}
    \label{fig:trump_mkt}
    \hspace{0cm}\begin{minipage}{0.83\textwidth}
    \vspace{0.3cm}
    \footnotesize \textit{Notes}. This figure shows an example of a market on the Manifold platform.
\end{minipage}
\end{figure}

\begin{figure}[H]
    \centering
    \hspace{0cm}\includegraphics[width=0.9\textwidth]{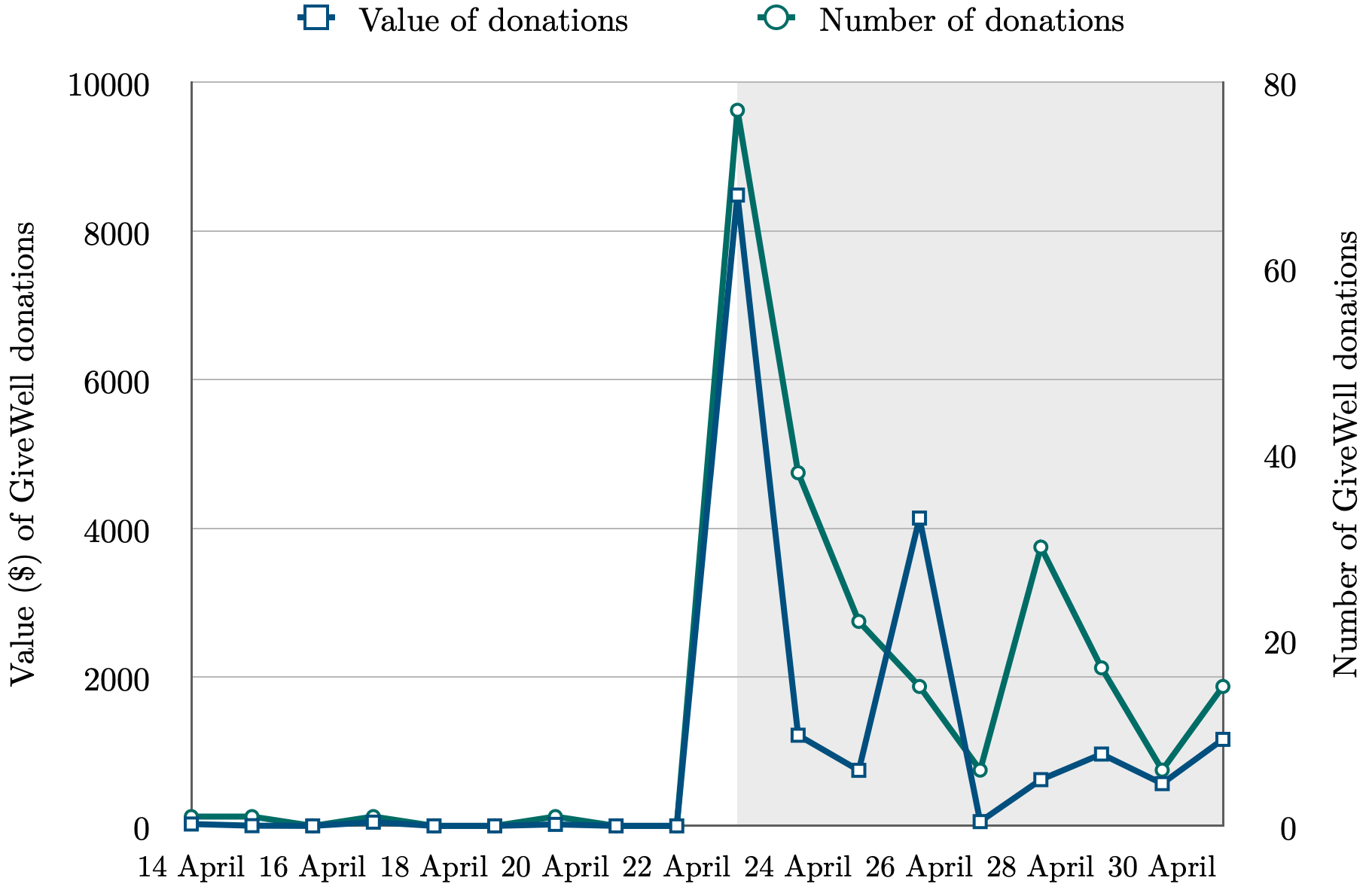}
    \caption{The impact of the Mana devaluation on GiveWell donations.}
    \label{fig:dev}
    \hspace{0cm}\begin{minipage}{0.8\textwidth}
    \vspace{0.3cm}
    \footnotesize \textit{Notes}. This figure shows the daily number and value (in USD) of donations to GiveWell's Maximum Impact Fund by Manifold users from 14 April 2024 to 1 May 2024. Dates after and including 23 April (the shaded area) represent dates after the announcement of the devaluation.
\end{minipage}
\end{figure}

\renewcommand{\arraystretch}{1}

\begin{table}[H]
\begin{threeparttable}
\caption{Topics}
\vspace{-0.5em}
\label{tab:topics}
\begin{tabular}{lcc}\hline
Topic                                  & \hspace{1em}Frequency\hspace{1em} & \hspace{1em}Percentage\hspace{1em} \\\hline
Artificial intelligence                & 103   & 12.61   \\
US politics                            & 50    & 6.12    \\
Macroeconomics                         & 31    & 3.79    \\
Israel and the Palestinian territories & 30    & 3.67    \\
YouTube and its streamers              & 27    & 3.3     \\
Video games                            & 24    & 2.94    \\
Manifold Markets                       & 22    & 2.69    \\
The Russo-Ukrainian War                & 21    & 2.57    \\
Football                               & 19    & 2.33    \\
American football                      & 17    & 2.08    \\
Legal rulings                          & 16    & 1.96    \\
Television                             & 14    & 1.71    \\
Basketball                             & 12    & 1.47    \\
X (formerly Twitter)                   & 11    & 1.35    \\
Taylor Swift                           & 10    & 1.22    \\
UK politics                            & 10    & 1.22    \\
Cryptocurrency                         & 9     & 1.1     \\
Elon Musk                              & 9     & 1.1     \\
The 2024 Paris Olympics                & 9     & 1.1     \\
The Oscars                             & 9     & 1.1     \\
Russian politics                       & 9     & 1.1     \\
Apple Inc.                             & 8     & 0.98    \\
Science                                & 8     & 0.98    \\
Nuclear energy and weapons             & 7     & 0.86    \\
Satellites and space                   & 7     & 0.86    \\
Amazon.com, Inc.                       & 6     & 0.73    \\
Canadian politics                      & 6     & 0.73    \\
Effective altruism                     & 6     & 0.73    \\
Tesla, Inc.                            & 6     & 0.73    \\
Other topics                           & 203   & 36.87   \\
\hline
Total                                  & 719   & 100 \\\hline
\end{tabular}
\begin{tablenotes}
\footnotesize
\item \hspace{-0.2em}\textit{Notes}. This table shows the topics of the questions in the sample; the topic of each question is manually assigned and can be viewed using the replication package. Topics with 5 or fewer questions are omitted from the table.
\end{tablenotes}
\end{threeparttable}
\end{table}
\renewcommand{\arraystretch}{1.22}

\begin{figure}[H]
    \centering
    \caption{Markets in the sample}\vspace{-0.5em}
    \hspace{0cm}\includegraphics[width=0.7\textwidth]{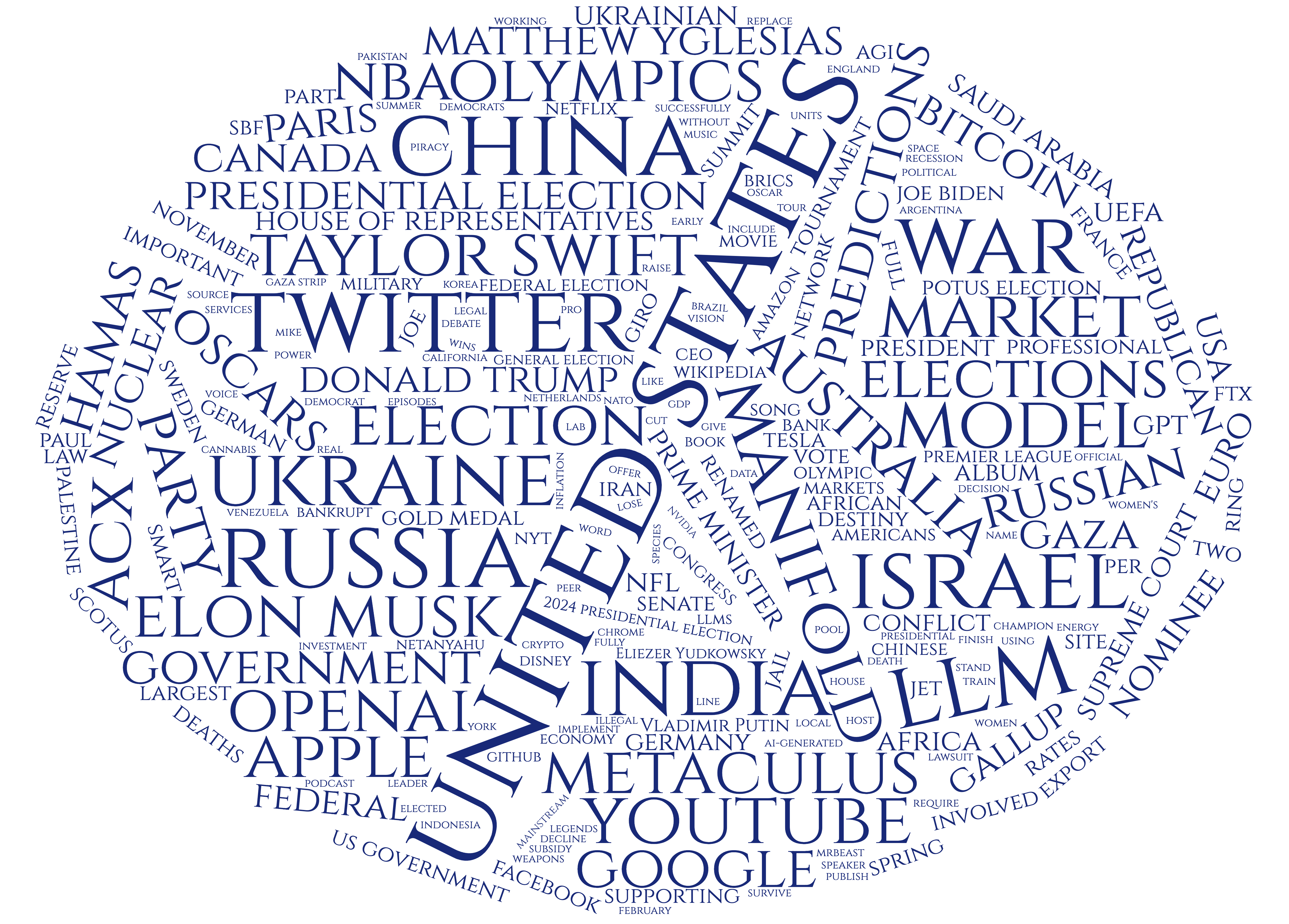}
    \label{fig:cloud}
    \hspace{0cm}\begin{minipage}{0.7\textwidth}
    \footnotesize \vspace{1em}\textit{Notes}. This figure shows a `word cloud' formed from the market questions in our sample.
\end{minipage}
\end{figure}

\begin{figure}[H]
    \centering
    \hspace{0cm}\includegraphics[width=0.9\textwidth]{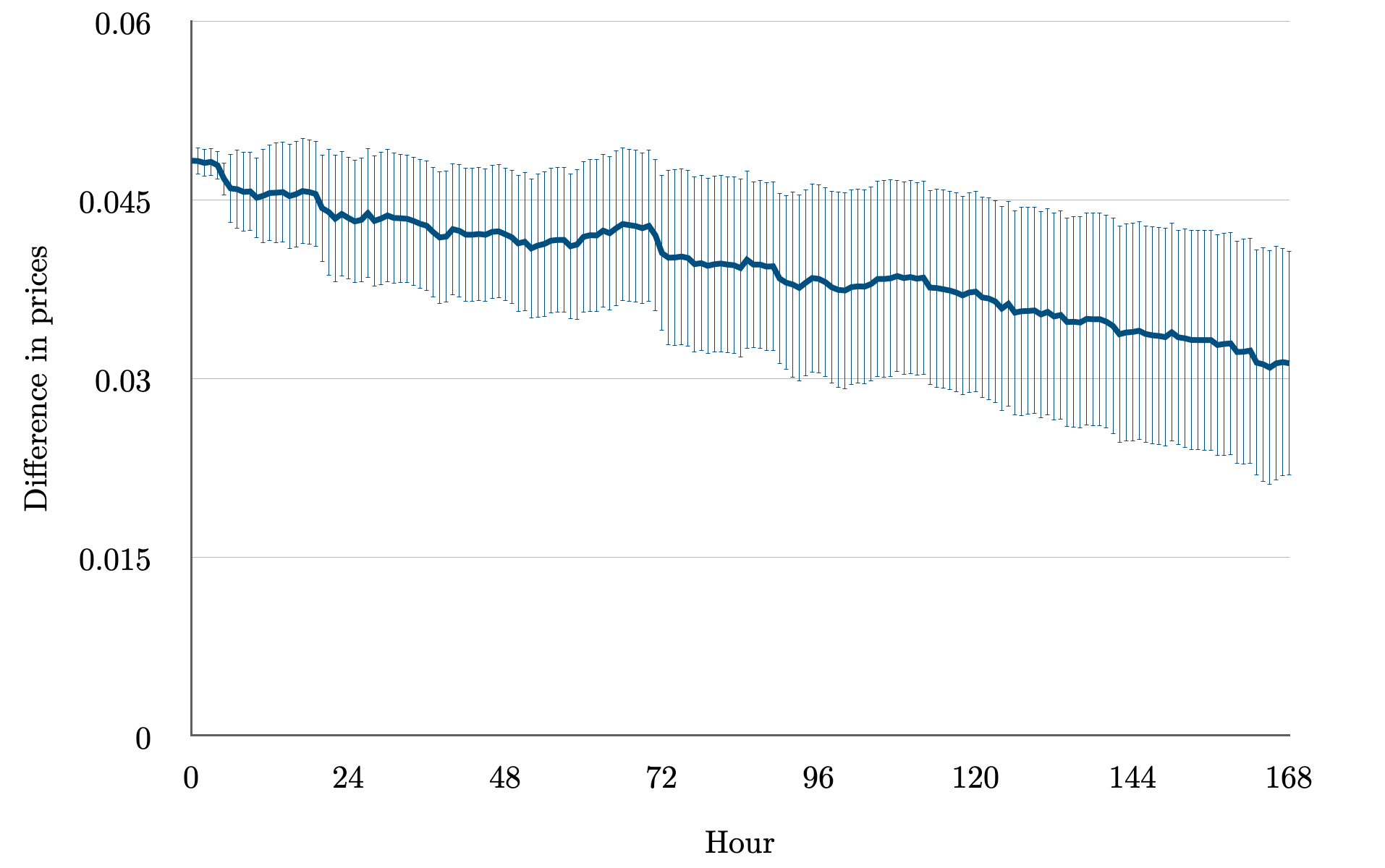}
    \caption{Comparing the `no' and `control' groups ($\hat{\beta}_2$)}
    \label{fig:no_control}
    \hspace{1.2cm}\begin{minipage}{0.72\textwidth}
    \vspace{0.2cm}
    \footnotesize \textit{Notes}. This figure plots the $\hat{\beta}_2$ coefficients obtained from estimating \eqref{eq_reg_main} for $t \in \{0, 1, ..., 167\}$ along with (robust) 95\% confidence intervals.
\end{minipage}
\end{figure}

\begin{figure}[H]
    \centering
    \hspace{0cm}\includegraphics[width=0.9\textwidth]{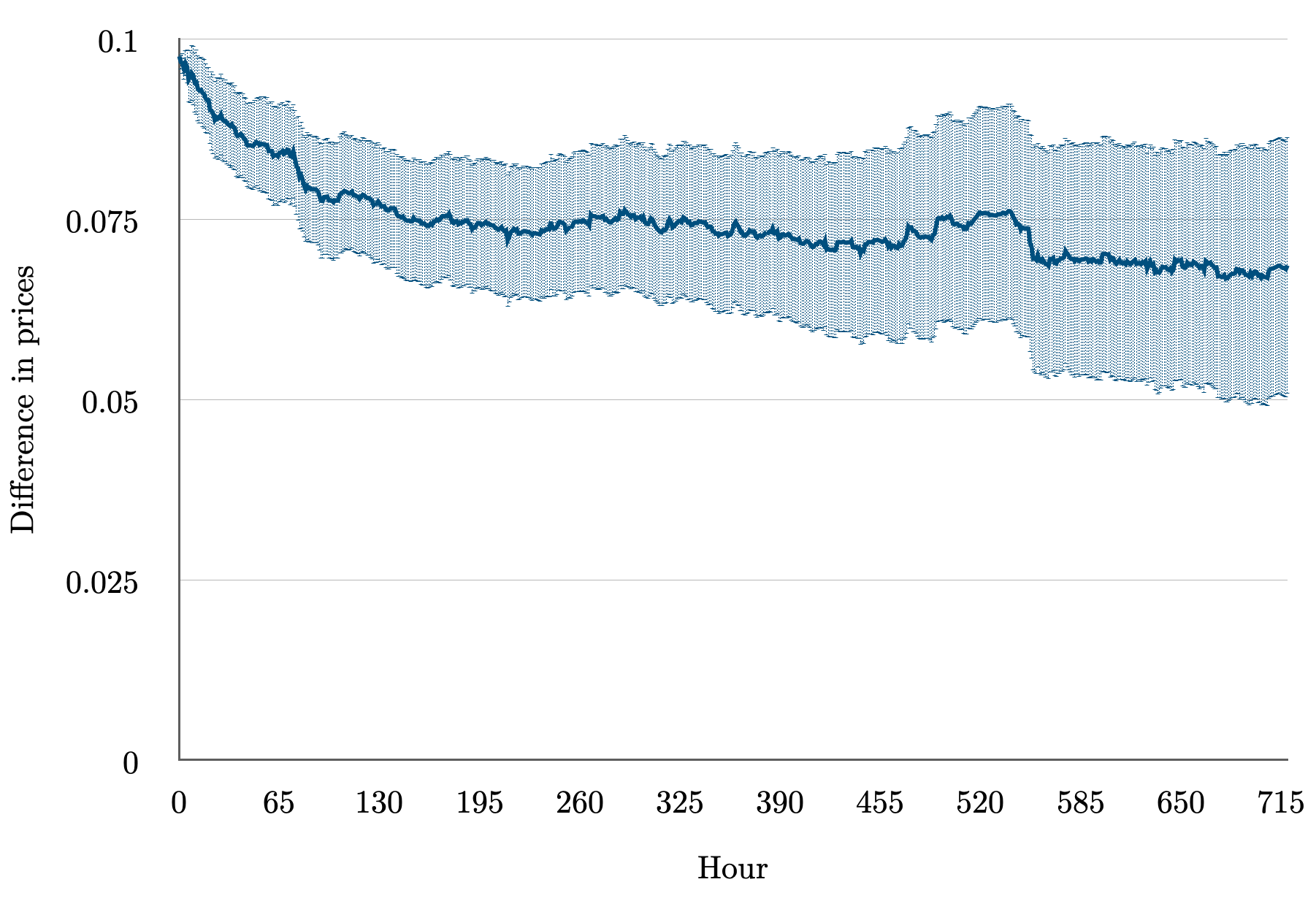}
    \caption{Comparing the `yes' and `no' groups ($\hat{\beta_1}$) after 30 days}
    \label{fig:yes_no_30d}
    \hspace{2cm}\begin{minipage}{0.8\textwidth}
    \vspace{0.3cm}
    \footnotesize \textit{Notes}. This figure plots the $\hat{\beta}_1$ coefficients obtained from estimating \eqref{eq_reg_main} for all $t \in \{0, 1, ..., 719\}$ along with (robust) 95\% confidence intervals.
\end{minipage}
\end{figure}

\begin{table}[H]
\begin{threeparttable}
\caption{Heterogeneity by number of traders}
\vspace{-0.5em}
\label{tab:trader_splits}
\begin{tabular}{lcccc}
\hline
            & (1)           & (2)           & (3)           & (4)           \\ \hline
Variable   & \hspace{0.9em}Full sample\hspace{0.9em}   & \hspace{0.9em}25\% split\hspace{0.9em}    & \hspace{0.9em}50\% split\hspace{0.9em}    & \hspace{0.9em}75\% split\hspace{0.9em}    \\
Yes         & 0.0748***     & 0.0693***     & 0.0666***     & 0.0625***     \\
            & {[}0.00446{]} & {[}0.00565{]} & {[}0.00602{]} & {[}0.00999{]} \\
Control     & 0.0313***     & 0.0268***     & 0.0264***     & 0.0259**      \\
            & {[}0.00484{]} & {[}0.00594{]} & {[}0.00673{]} & {[}0.0108{]}  \\
$p_{-1, i}$ & 1.000***      & 0.997***      & 1.009***      & 1.003***      \\
            & {[}0.00781{]} & {[}0.0103{]}  & {[}0.0157{]}  & {[}0.0283{]}  \\
Constant    & -0.0371***    & -0.0325***    & -0.0406***    & -0.0378**     \\
            & {[}0.00547{]} & {[}0.00719{]} & {[}0.00824{]} & {[}0.0149{]}  \\ \hline
$n$         & 817           & 640           & 419           & 207           \\
$R^2$       & 0.933         & 0.915         & 0.908         & 0.838         \\ \hline
\end{tabular}
\begin{tablenotes}
\footnotesize
\item \hspace{-0.2em}\textit{Notes}. This table shows the results of estimating regression \ref{eq_reg_main} on: the full sample (Column 1), the sample with an above 25th percentile number of traders (Column 2), the sample with an above median number of traders (Column 3), and the sample with an above 75th percentile number of traders (Column 4). Robust standard errors in parentheses (*** $p<0.01$, ** $p<0.05$, * $p<0.1$).
\end{tablenotes}
\end{threeparttable}
\end{table}

\begin{table}[H]
\begin{threeparttable}
\caption{Heterogeneity by total volume}
\label{tab:volume_splits}
\begin{tabular}{lcccc}
\hline
            & (1)           & (2)           & (3)           & (4)          \\ \hline
Variable    & \hspace{0.9em}Full sample\hspace{0.9em}   & \hspace{0.9em}25\% split\hspace{0.9em}    & \hspace{0.9em}50\% split\hspace{0.9em}    & \hspace{0.9em}75\% split\hspace{0.9em}   \\
Yes         & 0.0748***     & 0.0706***     & 0.0692***     & 0.0539***    \\
            & {[}0.00446{]} & {[}0.00485{]} & {[}0.00623{]} & {[}0.0106{]} \\
Control     & 0.0313***     & 0.0284***     & 0.0374***     & 0.0195*      \\
            & {[}0.00484{]} & {[}0.00575{]} & {[}0.00661{]} & {[}0.0112{]} \\
$p_{-1, i}$ & 1.000***      & 0.999***      & 1.012***      & 0.998***     \\
            & {[}0.00781{]} & {[}0.0103{]}  & {[}0.0124{]}  & {[}0.0245{]} \\
Constant    & -0.0371***    & -0.0359***    & -0.0479***    & -0.0315**    \\
            & {[}0.00547{]} & {[}0.00604{]} & {[}0.00699{]} & {[}0.0130{]} \\ \hline
$n$         & 817           & 613           & 410           & 205          \\
$R^2$       & 0.933         & 0.926         & 0.918         & 0.858        \\ \hline
\end{tabular}
\begin{tablenotes}
\footnotesize
\item \hspace{-0.2em}\textit{Notes}. This table shows the results of estimating regression \ref{eq_reg_main} on: the full sample (Column 1), the sample with an above 25th percentile total volume (Column 2), the sample with an above median total volume (Column 3), and the sample with an above 75th percentile total volume (Column 4). Robust standard errors in parentheses (*** $p<0.01$, ** $p<0.05$, * $p<0.1$).
\end{tablenotes}
\end{threeparttable}
\end{table}

\begin{table}[H]
\caption{Results from the Sweepcash markets}\label{tab:reg_sweepcash}
\begin{threeparttable}
\begin{tabular}{lccc}
\hline
Day \hspace{1em} & Full sample             & Full sample with controls & Reduced sample \\ \hline
0   & 0.0800***               & 0.0800***                                     & 0.0800***                          \\
    & {[}0.000{]}             & {[}0.000{]}                                   & {[}0.000{]}                        \\
1   & 0.0587***               & 0.0575***                                     & 0.0561***                          \\
    & {[}0.00934{]}           & {[}0.0104{]}                                  & {[}0.0111{]}                       \\
2   & 0.0537***               & 0.0537***                                     & 0.0481***                          \\
    & {[}0.00945{]}           & {[}0.0103{]}                                  & {[}0.0112{]}                       \\
3   & 0.0520***               & 0.0512***                                     & 0.0483***                          \\
    & {[}0.00960{]}           & {[}0.0103{]}                                  & {[}0.0112{]}                       \\
4   & 0.0463***               & 0.0447***                                     & 0.0401***                          \\
    & {[}0.0130{]}            & {[}0.0137{]}                                  & {[}0.0151{]}                       \\
5   & 0.0435***               & 0.0422***                                     & 0.0372**                           \\
    & {[}0.0138{]}            & {[}0.0147{]}                                  & {[}0.0160{]}                       \\
6   & 0.0463***               & 0.0455***                                     & 0.0392**                           \\
    & {[}0.0145{]}            & {[}0.0153{]}                                  & {[}0.0165{]}                       \\
7   & -                       & -                                             & 0.0348**                           \\
    & -                       & -                                             & {[}0.0169{]}                       \\ \hline
$n$ & 110 & 110                       & 89             \\ \hline
\end{tabular}
\begin{tablenotes}
\footnotesize
\item \hspace{-0.2em}\textit{Notes}. This table shows the results from the Sweepcash markets. The first column shows the $\hat{\beta_1}$ estimates obtained from regressing the price on day $t$ on the treatment dummy, controlling for $p_{-1, i}$. The second column shows the same results after adding controls for all other baseline variables. The third column shows results for the reduced sample of $89$ markets for which data is available over a 7 day horizon. Robust standard errors in parentheses (*** $p<0.01$, ** $p<0.05$, * $p<0.1$).
\end{tablenotes}
\end{threeparttable}
\end{table}

\clearpage

\section{Robustness}\label{sec_robustness}

In this section, we consider whether our results might be affected \textit{spillovers} between the markets. In theory, there are two ways in which this might occur:
\begin{itemize}
    \item \textit{Informational spillovers} arise when betting on one market provides information about another market, thereby influencing its price.
    \item \textit{Arbitrage spillovers} arise when betting on one market creates an arbitrage opportunity, thereby inducing trade on a closely related market.
\end{itemize}

Since we avoided betting on closely related markets, it is not clear that either type of spillover would have arisen during our experiment. For completeness, however, we now consider the possibility (and relevance) of such spillovers in some detail.

\textit{Theoretical observations.} In the main analysis, we estimate the model
\begin{equation}\label{eq_main_again}
p_{t, i} = \beta_0 + \beta_1 \mathbbm{1}_{Y, i} +  \beta_2 \mathbbm{1}_{C, i} + \beta_3 p_{-1, i} + u_i
\end{equation}
where $\mathbbm{1}_{Y, i}$ and $\mathbbm{1}_{C, i}$ are the treatment dummies, $p_{-1, i}$ is the price in market $i$ just before the bet, and $u_i$ captures the effect on subsequent trades on the price at time $t$. To make the analysis more tractable, we remove the control group from consideration and study the simpler model
\begin{equation}\label{eq_main_simple}
p_{t, i} = \beta_0 + \beta_1 \mathbbm{1}_{Y, i} + \beta_2 p_{-1, i} + u_i
\end{equation}
This model does not allow for the possibility of spillovers. To allow for this possibility, we suppose that a subset of the markets $S$ generates spillovers in another subset of the markets $S'$. For simplicity, we assume that $S$ and $S'$ are disjoint and let $\sigma_{i, j}$ denote the effect of market $j$ on market $i$. (Thus, $\sigma_{i, j} = 0$ if $i \notin S'$ or $j \notin S$.) We then generalise \eqref{eq_main_simple} to
\begin{equation}\label{eq_spillovers}
p_{t, i} = \beta_0 + \beta_1 \mathbbm{1}_{Y, i} + \beta_2 p_{-1, i} + \sum_{j \in S} \sigma_{i, j} \mathbbm{1}_{Y, j} - \sum_{j \in S} \sigma_{i, j} (1 - \mathbbm{1}_{Y, j}) +  u_i
\end{equation}
To understand \eqref{eq_spillovers}, suppose that market $i$ and market $j$ are positively correlated. In that case, betting \textit{yes} on $j$ should increase the price of $j$ and thereby increase the price of $i$: this explains the $\sigma_{i, j} \mathbbm{1}_{Y, j}$ term. For the same reason, however, betting \textit{no} on $j$ should decrease the price of $j$ and thereby decrease the price of $i$: this explains the $\sigma_{i, j} (1 - \mathbbm{1}_{Y, j})$ term. Given the random assignment of markets into either the yes group $Y$ (which increases $p_{t, i}$) or the no group $N$ (which decreases $p_{t, i}$), there is thus no reason to expect betting on $j$ to influence the price of $i$ on average. More formally, one sees that
\begin{equation}
\mathbb{E}[p_{t, i}] = \beta_0 + \beta_1 \mathbbm{1}_{Y, i} + \beta_2 p_{-1, i} + \sum_{j \in S} \sigma_{i, j} \mathbb{P}(j \in Y) - \sum_{j \in S} \sigma_{i, j} (1 - \mathbb{P}(j \in Y))
\end{equation}
and so the last two `spillover' terms cancel (given that $\mathbb{P}(j \in Y) = 0.5$, which in turn holds since the control group is omitted from the analysis).  In addition, it is worth noting that, while spillovers could be positive (in the case of positively correlated markets), they could just as well be negative --- which provides an additional reason for scepticism about whether they plausibly bias the results.

\textit{Empirical results.} The previous observations suggest that, even if spillovers did arise (which seems unlikely), they should not systematically bias the results. To investigate this issue empirically, we now consider the impact of dropping the `most related' markets from our sample. If spillovers are responsible for our results, then dropping these markets should substantially change our estimates. Conversely, if spillovers are absent (as we suspect) or irrelevant (as our theoretical analysis suggests), dropping these markets should not have a large impact on our estimates.

To do this, we considered sets of markets on which we had bet within a one-week period. Notice that, if one considers 7 day effects as we do here, it is not possible for spillovers to bias the results for markets whose betting times differ by more than 7 days. We gave every such set of questions to GPT-4.0 along with the prompt:
\begin{quote}
Some of these questions may be on the same topic. These questions have the property that, if you know the answer to one of them, you will learn a lot about the answer to another of them. If you find any such questions, can you flag them?
\end{quote}
In a handful of cases, GPT-4.0 flagged questions that were extremely unlikely to be connected via spillovers (e.g. `Will Jimmy Carter Die on a weekday in 2024?' \textit{vs} `Will Manifold require phone verification for all new users at any time in 2024?'). In such cases, we ignored GPT's suggestion. In all other cases, however, we dropped all but one of the flagged set of markets from the dataset, thereby generating a `reduced dataset' that can be viewed in the replication package.

Table \ref{tab:spillovers} displays the results of re-estimating our main analysis on the reduced dataset; to further reduce to the possibility of spillovers, we exclude the control group. As can be seen, dropping the (possibly) related markets reduces the sample from 556 to 482 markets. While this is not an especially large reduction, it should be remembered that we deliberately selected our markets to be unrelated. As can also be seen from comparing the two columns, the results are essentially identical after dropping the `most related' markets. This provides further evidence that spillovers are extremely unlikely to be generating our results.

\begin{table}[h]
\begin{threeparttable}
\caption{7 day effects on the reduced sample}
\vspace{-0.5em}
\label{tab:spillovers}
\begin{tabular}{lcc}
\hline
Variable  \hspace{2em}  & \hspace{2em}Main results\hspace{2em} & \hspace{1em}Reduced sample\hspace{1em} \\ \hline
Yes         & 0.075***        & 0.075***          \\
            & [0.004]        & [0.005]          \\
$p_{-1, i}$ & 1.003***        & 1.003***          \\
            & [0.008]        & [0.009]          \\
Constant    & -0.038***       & -0.037***         \\
            & [0.006]        & [0.006]          \\ \hline
$R^2$       & 0.937        & 0.942          \\
$n$           & 556          & 482            \\ \hline
\end{tabular}
\begin{tablenotes}
\footnotesize
\item \hspace{-0.2em}\textit{Notes}. This table shows the results of estimating regression \eqref{eq_main_simple} on the full sample (Column 1) and the reduced sample (Column 2), excluding the control group from all analyses. Robust standard errors in parentheses (*** $p<0.01$, ** $p<0.05$, * $p<0.1$).
\end{tablenotes}
\end{threeparttable}
\end{table}

\end{appendices}

\end{document}